\newcommand{\blue}[1]{#1}
\let\csname equation*\endcsname\relax
\let\csname endequation*\endcsname\relax
\theoremstyle{definition}
\newtheorem{theorem}{Theorem}[section]
\newtheorem{proposition}[theorem]{Proposition}
\newtheorem{assumption}[theorem]{Assumption}
\newtheorem{corollary}[theorem]{Corollary}
\theoremstyle{definition}
\newtheorem{definition}[theorem]{Definition}
\theoremstyle{remark}
\newtheorem{remark}[theorem]{Remark}
\DeclareMathOperator*{\argmax}{arg\,max}
\newcommand{\im}{{\rm Im}}
\newcommand{\lkd}{\mathcal{L}}
\newcommand{\potential}{\eta}
\newcommand{\expect}{\mathbb{E}}
\newcommand{\forward}{F}
\newcommand{\linear}{\mathrm J}
\newcommand{\data}{\bs y}
\newcommand{\gradientuk}{\nabla_{u}\potential(u^\ast;\data)}
\newcommand{\hessian}{\mathrm H}
\newcommand{\normal}{\mathcal{N}}
\newcommand{\hilbert}{\mathcal{H}}
\newcommand{\dataspace}{\mathbb{R}^d}
\newcommand{\prcov}{\Gamma_{\mathrm{pr}}}
\newcommand{\obscov}{\Gamma_{\mathrm{obs}}}
\newcommand{\rand}{\xi}
\newcommand{\eigentrunc}{\varrho}
\newcommand{\R}{\mathbb{R}}
\newcommand{\cA}{\mathrm A}
\newcommand{\cB}{\mathrm B}
\newcommand{\cG}{\mathrm G}
\newcommand{\cD}{\mathrm D}
\newcommand{\cI}{\mathrm I}
\newcommand{\cU}{\mathrm U}
\newcommand{\cW}{\mathrm W}
\newcommand{\basis}{\mathrm \Psi}
\newcommand{\cT}{\mathrm T}
\newcommand{\cP}{\mathrm P}
\newcommand{\cZ}{\mathrm Z}
\newcommand{\cN}{\mathcal{N}}
\newcommand{\cJ}{\mathrm{J}}
\newcommand{\Ev}{\mathbb{E}}
\newcommand{\cO}{\mathcal{O}}
\newcommand{\Var}{\mathrm{Var}}
\newcommand{\Cov}{\mathrm{Cov}}
\renewcommand{\bs}{\boldsymbol}
\renewcommand{\bxi}{\bs\xi}
\newcommand{\bV}{\bs V}
\newcommand{\be}{\bs e}
\newcommand{\bx}{\bs x}
\newcommand{\by}{\bs y}
\newcommand{\bv}{\bs v}
\renewcommand{\br}{\bs r}
\newcommand{\bn}{\bs n}
\newcommand{\bGobs}{\Gamma_{\textnormal{obs}}}
\newcommand{\bGobsi}{\Gamma_{\textnormal{obs}}^{-1}}
\newcommand{\bGpr}{\Gamma_{\textnormal{pr}}}
\newcommand{\range}{{\rm range}}
\newcommand{\mell}{\ell\text{\,-\,}1}
\newcommand{\rvl}{ {\bs V}_{\!\!\ell} }
\newcommand{\rvml}{ {\bs V}_{\!\!\mell} }
\newcommand*\circled[1]{\tikz[baseline=(char.base)]{
            \node[shape=circle,draw,inner sep=0.5pt] (char) {#1};}}
\newcommand{\ff}{f\!f}
\begin{document}

\title[MLDILI for inverse problems]{Multilevel dimension-independent likelihood-informed MCMC for large-scale inverse problems}

\author{Tiangang Cui$^{1}$, Gianluca Detommaso$^{2}$, Robert Scheichl$^{3}$}
\address{$^1$School of Mathematics and Statistics, University of Sydney, NSW 2006, Australia}
\address{$^2$Amazon Web Services Germany GmbH,  Krausenstraße 38, 10117 Berlin, Germany}
\address{$^3$Institute for Applied Mathematics and Interdisciplinary Center for Scientific Computing, Heidelberg University, Im Neuenheimer Feld 205, 69120 Heidelberg, Germany}
\ead{\mailto{tiangang.cui@sydney.edu.au}, \mailto{detomma@amazon.de}, \mailto{r.scheichl@uni-heidelberg.de}}

\setlist[enumerate]{leftmargin=3em}


\begin{abstract}
We present a non-trivial integration of dimension-independent
likelihood-informed (DILI) MCMC (Cui, Law, Marzouk, 2016) and the
multilevel MCMC (Dodwell et al., 2015) to explore the hierarchy of
posterior  distributions.
This integration offers several advantages:
First, DILI-MCMC employs an intrinsic {\it likelihood-informed
subspace} (LIS) (Cui et al., 2014)---which involves a
number of  forward and adjoint model simulations---to design
accelerated operator-weighted proposals. 
By exploiting the multilevel structure of the discretised parameters
and discretised forward models, we design a {\it Rayleigh-Ritz
procedure} to significantly reduce the computational effort in
building the LIS and operating with DILI proposals. 
Second, the resulting DILI-MCMC can drastically improve the sampling
efficiency of MCMC at each level, and hence reduce the integration
error of the multilevel algorithm for fixed CPU time.  
Numerical results confirm the improved computational efficiency of the multilevel DILI approach.\\

\noindent{\it Keywords\/}: multilevel Monte Carlo, likelihood-informed subspaces, dimension-independent MCMC, inverse problems
\end{abstract}


\section{Introduction}

Inverse problems aim to estimate unknown parameters of mathematical models from noisy and indirect observations. 
The unknown parameters, often represented as functions, are related to the observed data through a forward model, such as a differential equation, that maps realisations of parameters to observables. 
\blue{For ill-posed inverse problems}, there may exist many feasible realisations of parameters that are consistent with the observed data, and small perturbations in the data may lead to large perturbations in unregularised parameter estimates.
The Bayesian approach \cite{IP:KaiSo_2005, IP:Stuart_2010, IP:Tarantola_2004} casts the solution of inverse problems as the posterior probability distribution of the model parameters conditioned on the data.
This offers a natural way to integrate the forward model and the data together with prior knowledge and a stochastic description of measurement and/or model errors to remove the ill-posedness and to quantify uncertainties in parameters and parameter-dependent predictions.
As a result, parameter estimations, model predictions, and associated uncertainty quantifications can be issued in the form of marginal distributions or expectations of some quantities of interest (QoI) over the posterior. 
Due to the typically high parameter dimensions and the high computational cost of the forward models, characterising the posterior and computing posterior expectations are in general computationally challenging tasks.
Integrating multilevel Markov chain Monte Carlo (MCMC) \cite {MCMC:KST_2013,MCMC:HSS_2013}, likelihood-informed parameter reduction \cite{DimRedu:Cui_etal_2014,DimRedu:Spantini_etal_2015,DimRedu:Zahm_etal_2018} and dimension-independent MCMC \cite{MCMC:BRSV_2008,MCMC:CRSW_2013,MCMC:CLM_2016,MCMC:DS_2018}, we present here an integrated framework to significantly accelerate the computation of posterior expectations for large-scale inverse problems. 

In inverse problems, unknown parameters are often cast as functions, and hence the Bayesian inference has to be carried out over typically {\it high-dimensional discretisations} of the parameters that resolve the spatial and/or temporal variability of the underlying problem sufficiently. 
Examples are the permeability field of a porous medium \cite{IP:CFO_2011, MCMC:KST_2013, IP:HLH_2003, IP:ILS_2013} or Brownian forcing of a stochastic ordinary differential equation \cite{MCMC:BPRF_2006}. 
In those settings, efficient MCMC methods have been developed to sample the posterior and compute posterior expectations with convergence rates that are independent of the discretised parameter dimension; \blue{these include} (preconditioned) Crank-Nicolson (pCN) methods \cite{MCMC:BRSV_2008, MCMC:CRSW_2013, MCMC:HSV_2011} that establish the foundation for designing and analysing MCMC algorithms in a function space setting, stochastic Newton methods \cite{MCMC:MWBG_2012,MCMC:Petra_etal_2014} that utilise Hessian information to accelerate the convergence, as well as operator-weighted methods \cite{ MCMC:CLM_2016, MCMC:Law_2014,MCMC:DS_2018} that generalise pCN methods using (potentially location-dependent) operators to adapt to the geometry of the posterior.  

Discretisation also arises in the numerical solution of the forward model, e.g., finite-element discretisation of PDEs.
As many degrees of freedom are needed, it can be computationally demanding to accurately resolve the forward model, which is required to simulate the posterior density. 
A natural way to reduce the computational cost is to utilise a hierarchy of forward models related to a sequence of grid discretisations, ranging from computationally cheaper and less accurate coarse models to more costly but more accurate fine models.
Corresponding to this hierarchy of models, the parameters can also be represented by a sequence of discretised functions with increasing dimensions. 
This yields {\it a hierarchy of posterior distributions}.
By allocating different numbers of MCMC simulations to sample posteriors across different levels and by combining all those sample-based posterior estimations using {\it a telescoping sum} \cite{Sto:Giles_2008}, the multilevel MCMC \cite{MCMC:KST_2013,MCMC:HSS_2013} provides accelerated and unbiased estimates of posterior expectations.

We present a non-trivial integration of the dimension-independent likelihood-informed (DILI) MCMC \cite{MCMC:CLM_2016} and the multilevel MCMC in \cite{MCMC:KST_2013}  to explore the hierarchy of posterior distributions.
This integration offers several advantages:
First, DILI-MCMC employs an intrinsic {\it likelihood-informed subspace} (LIS) \cite{DimRedu:Cui_etal_2014}---which involves a number of forward and adjoint model simulations---to design accelerated operator-weighted proposals. 
By exploiting the multilevel structure of the discretised parameters and discretised forward models, we design a {\it Rayleigh-Ritz procedure} to significantly reduce the computational effort in building a {\it hierarchical LIS} and operating with DILI proposals. 
Second, the resulting DILI-MCMC can drastically improve the sampling efficiency of MCMC at each level, and hence reduce the integration error of multilevel Monte Carlo for a fixed CPU time budget.  
Numerical results confirm the improved computational
efficiency of the proposed multilevel DILI approach.

We note that the DILI proposal has been used before in the  multilevel sequential Monte Carlo (SMC) setting \cite{MCMC:beskos2018multilevel}, but in a very different way. We use derivative information of the likelihood to recursively construct the LIS via matrix--free eigenvalue solves, whereas \cite{MCMC:beskos2018multilevel} uses multilevel SMC to estimate the full-rank empirical posterior covariance matrix and then builds the LIS from this posterior covariance matrix. 
Moreover, we construct DILI proposals by exploiting the structure of the hierarchical LIS to couple Markov chains across levels, whereas \cite{MCMC:beskos2018multilevel} employs the original DILI proposal in the mutation step of SMC to improve mixing.

The paper is structured as follows. Section \ref{sec_backgr} introduces the framework of Bayesian inverse problems and MCMC sampling while section \ref{sec_MLMCMC} discusses the general framework of multilevel MCMC. 
The Rayleigh-Ritz procedure for the recursive construction of the hierarchical LIS is presented in section \ref{sec_MLLIS}. 
The coupled DILI proposals that can exploit the hierarchical LIS are introduced in section \ref{sec_MLDILI}.
Section \ref{sec_num_exp} provides numerical experiments to demonstrate the efficacy of the resulting MLDILI method, while finally, in section \ref{sec_conclusion}, we  provide some concluding remarks.


\section{Background}\label{sec_backgr}

In this section, we review the Bayesian formulation of inverse
problems, the dimension-independent likelihood-informed MCMC approach,
posterior discretisation, as well as the bias-variance
decomposition for MCMC algorithms.   

\subsection{Bayesian inference framework} 
Suppose the parameter of interest is some function $u$ in a separable
Hilbert space $\hilbert(\Omega)$ defined over a given bounded domain
$\Omega \subset \mathbb{R}^d$. 
\blue{We introduce a prior probability measure $\mu_0$ to represent the \textit{a priori} information
about the function $u$.}
The inner product on $\hilbert$ is denoted by $\langle \cdot \, , \cdot \rangle_\hilbert$, with associated norm denoted by $\| \cdot \|_\hilbert$. For brevity, where misinterpretation is not possible, we will drop the subscript $\hilbert$. 
We assume that the prior measure is Gaussian with mean $m_0 \in \hilbert$ and a self-adjoint, positive definite covariance operator $\bGpr$ that is trace-class, so that the prior provides a full probability measure on $\hilbert$.
Given observed data $\data\in\dataspace$ and the forward model $\forward:\hilbert \to \dataspace$, we define the likelihood function $\lkd(\data | u)$ of $\data$ given $u$. %
Denoting the posterior probability measure by $\mu_{y}$, the posterior
distribution  on any infinitesimal volume $du \subseteq \hilbert$ is given by
\begin{equation}\label{eq:post}
	\mu_{y}(du)\propto \lkd(\data|u)\mu_0(du)\,.
\end{equation}
Making the simplifying assumption that the observational noise is
additive and Gaussian with zero mean and covariance matrix $\bGobs$, 
the observation model has the form
\begin{equation}\label{eq:obs}
	\data=\forward(u)+\be,\quad \be\sim\cN(0,\bGobs)\,,
\end{equation}
and it follows immediately that the likelihood function satisfies 
\begin{equation}\label{eq:like}
	\lkd(\data|u) \propto\exp(-\eta(u;\data))\,,
\end{equation}
where $\eta(\data;u)$ is the  data-misfit functional defined by
\begin{equation}\label{eq:misfit}
	\eta(u; \data)\equiv\frac{1}{2} \big(\data-\forward(u)\big)^\top \bGobsi \big(\data-\forward(u)\big)\,.
\end{equation}

\begin{assumption}
\label{assum1}
We assume that the forward model $\forward: \hilbert \rightarrow \dataspace$ satisfies:
\begin{enumerate}
\item For all $\varepsilon > 0$, there exists a constant $K(\varepsilon) > 0$ such that, for all $u \in \hilbert$, 
\[
| \forward(u) | \leq \exp\left( K(\varepsilon) + \varepsilon\|u\|_\hilbert^2 \right) .
\]
\item For any $u \in \hilbert$, there exists a bounded linear operator $\cJ(u): \hilbert \rightarrow \dataspace$ such that
\[
\lim_{\delta u \rightarrow 0} \frac{|\forward(u+\delta u) - \forward(u) - \cJ(u) \delta u|}{\|\delta u\|_\hilbert} = 0, \quad \forall \delta u \in \hilbert.
\]
In particular, this also implies the Lipschitz continuity of $F$.
\end{enumerate}
\end{assumption}
Given observations \blue{$\data \in \mathbb{R}^d$}  
and a
forward model that satisfies Assumption~\ref{assum1}, \cite{IP:Stuart_2010} shows that the resulting data-misfit function is sufficiently bounded and locally Lipschitz, and thus the posterior measure is dominated by the prior measure.
The second condition states that the forward model is first-order Fr\'{e}chet differentiable, and hence the Gauss-Newton approximation of the Hessian of the data-misfit functional is bounded.

Suppose we have some quantity of interest (QoI) that is a functional of the parameter $u$ denoted by $Q:\hilbert\to\R^{q}$, e.g., flow rate. 
Then, posterior-based model predictions can be formulated as expectations of
that QoI over the posterior. We will denote them by 
\[
\Ev_{\mu_y}\big[ Q \big] \equiv \Ev_{U\sim \mu_y} \big[ Q(U) \big].
\]
MCMC methods construct a Markov chain of correlated random variables  $U^{(1)}, \ldots, U^{(N)}$ for which the posterior is the invariant distribution.
Then, one can estimate expected QoI(s) using Monte Carlo integration:
\begin{equation}\label{eq:post_mc}
	\Ev_{\mu_y}\big[ Q \big]  \approx \frac{1}{N} {\textstyle \sum_{i = 1}^{N}} Q(U^{(i)}).
\end{equation}

\subsection{Dimension-independent likelihood-informed MCMC on function space}\label{sec_DILI}

The Metropolis-Hastings (MH) algorithm
\cite{MCMC:Hastings_1970,MCMC:Metropolis_etal_1953} provides a
general framework to design transition kernels that have the posterior as
their invariant distribution to generate a Markov chain of random
variables that targets the posterior.
\begin{definition}
[Metropolis-Hastings Kernel] Given the current state $U^{(k)} =
u^\ast$, a candidate state $u^{\prime}$ can be drawn from a proposal
distribution  $q(u^\ast,\cdot)$.
We define a pair of probability measures 
\begin{equation}
\begin{array}{rll}
\label{eq:nus}
\nu(du^\ast,du^{\prime}) &=& q(u^\ast,du^{\prime}) \mu_y(du^\ast) \\
\nu^\bot(du^\ast,du^{\prime}) &=& q(u^{\prime},du^\ast) \mu_y(du^{\prime}).
\end{array}
\end{equation}
Then, the next state of the Markov chain is set to $U^{(k+1)} =
u^{\prime}$ with probability
\begin{equation} 
\alpha(u^\ast,u^{\prime}) = \min 
\Big \{1, 
\frac{d\nu^\bot}{d\nu}(u^\ast,u^{\prime})
 \Big \} , 
\label{acceptance} 
\end{equation}
and to $U^{(k+1)}=u^\ast$ otherwise.
\end{definition}

MH algorithms require the absolute continuity condition $\nu^\bot \ll
\nu$ to define a valid transition kernel with non-zero acceptance
probability as the dimension goes to infinity
\cite{MCMC:Tierney_1998}.  \blue{We will refer to a MH algorithm as {\it
  well-defined} or {\em dimension-independent} if this absolute continuity condition holds. }
For probability measures over function spaces in the setting
considered here, the sequence of papers
\cite{MCMC:BRSV_2008, MCMC:CRSW_2013,
  MCMC:HSV_2012, MCMC:HSV_2011, IP:Stuart_2010} provide a viable way to construct well-defined MH
algorithms using a preconditioned Crank-Nicolson (pCN) discretisation of a particular
Langevin SDE. 
The pCN proposal has the form 
\begin{equation}
	\label{eq:pcn}
	u^{\prime} = a (u^\ast - m_0) + m_0 -  \gamma (1-a) \prcov \gradientuk + \sqrt{1 - a^2} \prcov^{\frac12} \rand,
\end{equation}
where $\rand \sim \normal(0, \cI) $ and $\gamma \in \{0, 1\}$ is a
tuning parameter to switch between Langevin ($\gamma = 1$) and Ornstein-Uhlenbeck proposal ($\gamma = 0$). It is required that $a \in (-1,1)$. 
The pCN proposal \eqref{eq:pcn} satisfies the desired absolute
continuity condition and the acceptance probability does not go to
zero as the discretisation of $u$ is refined.
\blue{In addition, \cite{hairer2014spectral,MCMC:DS_2018} establish that, under mild conditions, the spectral gaps of the MCMC transition kernels defined by (generalised) pCN proposals do exist, and that they are independent of the dimension of the discretised parameters. Thus, the statistical efficiency of pCN proposals is also dimension-independent.}

The pCN proposal \eqref{eq:pcn} scales uniformly in all directions with respect to the norm induced by the prior covariance. 
Since the posterior necessarily contracts the prior
along parameter directions that are informed by the likelihood, the
Markov chain produced by the standard pCN proposal decorrelates more
quickly in the likelihood-informed parameter subspace than in the
orthogonal complement, which is
prior-dominated \cite{MCMC:CLM_2016,MCMC:Law_2014}. 
Thus, proposed moves of pCN can be effectively too small 
in prior-dominated directions, resulting in poor mixing.

\newcommand{\dilipotential}[1]{ {- \potential(#1; y) - \frac12 \langle #1, \cB^{-2} (\cA^2 + \cB^2 - \cI^{}) #1 \rangle} }

The dimension-independent likelihood-informed (DILI) MCMC \cite{MCMC:CLM_2016} provides a systematic way to design proposals that adapt to the anisotropic structure of the posterior while retaining dimension-independent performance.
It considers operator-weighted proposals in the form of
\begin{equation}
u' =  \prcov^{\frac12}\cA\prcov^{-\frac12} (u^\ast-m_0) + m_0 - \prcov^{\frac12} \cG \prcov^{\frac12} \gradientuk + \prcov^{\frac12}\cB \rand,
\label{eq:operator_weighted}
\end{equation}
where $\cA$, $\cB$, and $\cG$ are bounded, self-adjoint operators on $\im(\prcov^{-1/2})$ that satisfy certain properties to be discussed below.
In this paper, we set $\cG$ to zero throughout and thus consider only non-Langevin type proposals. By applying a whitening transform 
\begin{equation}
v = \prcov^{-\frac12} (u - m_0)
\label{eq:trans_v}
\end{equation} 
to the parameter $u$ and by denoting (in
a slight abuse of notation) the associated data-misfit functional again by
$\potential(v;y) \leftarrow \potential(\prcov^{1/2} v + m_0 ; y)$, the
proposal \eqref{eq:operator_weighted} simplifies to
\begin{equation}
v' =  \cA v^\ast + \cB \rand .
\label{eq:operator_weighted_v}
\end{equation}
The following theorem provides sufficient conditions for constructing
the operators $\cA$ and $\cB$ such that the proposal
\eqref{eq:operator_weighted_v} yields a well-defined MH algorithm, as
well as a formula for the acceptance probability.
\begin{theorem} 
\label{theo:1}
Suppose that the posterior measure $\mu_y$ is equivalent to the prior
measure $\mu_0$ and that the self-adjoint operators $\cA$ and $\cB$
commute, that is, they can be defined by a common set of
eigenfunctions $\{\psi_i \in
\im(\prcov^{-1/2}) : i \in \mathbb{N}\}$ with corresponding
eigenvalues $\{a_i\}_{i = 1}^{\infty}$ and $\{b_i\}_{i = 1}^{\infty}$,
respectively. Suppose further that
\[
\{a_i\}_{i = 1}^{\infty},\ \{b_i\}_{i = 1}^{\infty} \subset \R \backslash \{0\}
\quad \text{and} \quad \sum_{i = 1}^\infty \left( a_i^2 + b_i^2 - 1
\right)^2  < \infty.
\]
Then, the proposal \eqref{eq:operator_weighted_v} delivers a well-defined
MCMC algorithm  and the acceptance probability is given by
\[
\alpha\big(v^\ast, v'\big) = \min \bigg\{ 1, \frac{\exp\big(\dilipotential{v^{\prime\,}}\big)}{\exp\big(\dilipotential{v^\ast}\big)} \bigg\}.
\]
\end{theorem}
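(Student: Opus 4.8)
The plan is to reduce the statement to a standard absolute-continuity computation in the whitened coordinates $v = \prcov^{-1/2}(u-m_0)$. Since the prior $\mu_0$ pushes forward under the whitening map to the standard Gaussian $\normal(0,\cI)$ on the Cameron--Martin-type space $\im(\prcov^{-1/2})$ (call the pushed-forward prior $\gamma$), and since $\mu_y$ is assumed equivalent to $\mu_0$, the pushed-forward posterior $\pi$ is equivalent to $\gamma$ with Radon--Nikodym derivative $\tfrac{d\pi}{d\gamma}(v) \propto \exp(-\potential(v;y))$. The MH acceptance ratio \eqref{acceptance} is the Radon--Nikodym derivative $\tfrac{d\nu^\bot}{d\nu}$, so everything comes down to showing that $\nu$ and $\nu^\bot$ (built from the proposal \eqref{eq:operator_weighted_v} and $\pi$) are mutually absolutely continuous and computing the density.

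First I would identify the proposal kernel: under \eqref{eq:operator_weighted_v}, $q(v^\ast,\cdot)$ is the Gaussian $\normal(\cA v^\ast, \cB^2)$. Using the common eigenbasis $\{\psi_i\}$ with eigenvalues $a_i,b_i$, I would write $v = \sum_i v_i \psi_i$ coordinatewise, so that the pair $(v^\ast, v')$ has, under $\nu$, the law with ``reference'' product Gaussian structure: each coordinate pair $(v^\ast_i, v'_i)$ is jointly Gaussian. The key algebraic step is to compare the joint law of $(v^\ast,v')$ under $\nu = q(v^\ast,dv')\,\pi(dv^\ast)$ against that under $\nu^\bot = q(v',dv^\ast)\,\pi(dv')$. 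Factor out the posterior reweighting: $\tfrac{d\nu^\bot}{d\nu}(v^\ast,v') = \tfrac{\exp(-\potential(v';y))}{\exp(-\potential(v^\ast;y))} \cdot R(v^\ast,v')$, where $R$ is the analogous ratio built from the \emph{Gaussian} reference measures $q(v^\ast,dv')\,\gamma(dv^\ast)$ versus $q(v',dv^\ast)\,\gamma(dv')$. The bulk of the work is showing $R$ is well-defined (both Gaussian joint laws equivalent) and evaluating it; this is where the hypothesis $\sum_i(a_i^2+b_i^2-1)^2<\infty$ enters, via the Feldman--H\'ajek theorem / Kakutani's criterion for equivalence of infinite products of Gaussians — the summability is exactly the condition that the product of per-coordinate Hellinger affinities stays bounded away from zero. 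A per-coordinate computation (diagonalising the $2\times 2$ covariance of $(v^\ast_i,v'_i)$, which is symmetric in $v^\ast_i \leftrightarrow v'_i$ precisely when $a_i,b_i$ are shared) should give $R(v^\ast,v') = \exp\!\big(-\tfrac12\langle v', \cB^{-2}(\cA^2+\cB^2-\cI)v'\rangle + \tfrac12\langle v^\ast, \cB^{-2}(\cA^2+\cB^2-\cI)v^\ast\rangle\big)$, matching the stated formula once combined with the likelihood ratio.

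The main obstacle is the rigorous handling of the infinite-dimensional Gaussian ratio $R$: one must verify that the formal quadratic form $\cB^{-2}(\cA^2+\cB^2-\cI)$ — note $\cB^{-2}$ is generally unbounded — is defined $\pi$-a.s. on the relevant coordinates, that the series $\sum_i (a_i^2+b_i^2-1) v_i^2 /b_i^2$ converges almost surely (a Kolmogorov three-series / Gaussian concentration argument using $\expect[v_i^2]=1$ under $\gamma$ and the summability hypothesis), and that the interchange of the coordinatewise densities with the limit is legitimate. I would handle this by truncating to the first $n$ coordinates, where all computations are finite-dimensional and elementary, deriving the exact acceptance ratio there, and then passing to the limit using the Feldman--H\'ajek equivalence to control the tail. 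Finally, I would note that $\nu^\bot \ll \nu$ (equivalence, in fact) is exactly the ``well-defined'' property required, and that the acceptance probability $\alpha = \min\{1, d\nu^\bot/d\nu\}$ then reads off as the displayed formula, using the shorthand $\dilipotential{\cdot}$; dimension independence is immediate since no step degrades as $n\to\infty$.
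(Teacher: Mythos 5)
Your proposal is correct in outline, but it takes a genuinely different route from the paper only in the sense that the paper does not actually prove the theorem: its ``proof'' is a two-line citation, noting that the hypotheses are simplified versions of those in Theorem~3.1 of the DILI paper \cite{MCMC:CLM_2016} and that the acceptance probability follows from Corollary~3.5 there. What you have written is essentially a reconstruction of the argument contained in that reference: pass to whitened coordinates, identify the proposal as the Gaussian kernel $\normal(\cA v^\ast,\cB^2)$, factor the posterior reweighting out of $d\nu^\bot/d\nu$, and reduce the remaining Gaussian ratio $R$ to a product of per-coordinate $2\times2$ computations controlled by Kakutani/Feldman--H\'ajek. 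The coordinatewise algebra checks out: both joint covariances $\bigl(\begin{smallmatrix}1 & a_i\\ a_i & a_i^2+b_i^2\end{smallmatrix}\bigr)$ and its transpose-image have determinant $b_i^2$, so the normalising constants cancel and the log-ratio collapses to $-\tfrac{1}{2}(a_i^2+b_i^2-1)\bigl((v_i')^2-(v_i^\ast)^2\bigr)/b_i^2$, which sums to the quadratic form in the stated acceptance probability. One caveat you should flag explicitly: the hypothesis $\sum_i(a_i^2+b_i^2-1)^2<\infty$ alone does not control the series $\sum_i(a_i^2+b_i^2-1)^2/b_i^4$ that your three-series/Kakutani step actually requires, since $b_i$ is only assumed nonzero, not bounded away from zero. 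This is a gap in the simplified hypotheses as stated here rather than in your argument; the fuller conditions of Theorem~3.1 of \cite{MCMC:CLM_2016} (which this statement compresses) supply the missing uniform lower bound on $\cB$. With that assumption made explicit, your truncate-and-pass-to-the-limit plan is sound and buys a self-contained proof where the paper offers only a pointer.
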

\begin{proof}
The above assumptions are simplified versions of those in
Theorem~3.1 of~\cite{MCMC:CLM_2016}. The acceptance probability
directly follows from Corollary~3.5 of~\cite{MCMC:CLM_2016}.
\end{proof}

The DILI proposal \eqref{eq:operator_weighted_v} enables different scalings in the proposal moves along different parameter directions. By choosing appropriate eigenfunctions $\{\psi_i\}_{i =1}^{\infty}$ and eigenvalues $\{a_i, b_i\}_{i = 1}^{\infty}$, it can capture the geometry of the posterior, and thus can potentially improve the mixing of the resulting Markov chain. 

The likelihood-informed subspace (LIS)
\cite{DimRedu:Cui_etal_2014,ROM:CMW_2016} provides a viable way to
construct such operators $\cA$ and $\cB$.  
It is spanned by the leading eigenfunctions of the eigenvalue problem
\begin{equation}
\expect_{V \sim \mu^\ast} \big[ \hessian(V) \big]  \psi_i = \lambda_i \psi_i,
\label{eq:eig}
\end{equation}
where $\hessian(v)$ is some information metric of the likelihood
function (with respect to the transformed parameter $v$), for example,
the Hessian of the data-misfit functional
or the Fisher information, and $\mu^\ast$ is some reference measure,
for example, the posterior or the Laplace approximation of the
posterior.
In the LIS, spanned by $\{\psi_1, \ldots, \psi_r\}$, the posterior may
significantly differ from the prior. Thus, we prescribe inhomogeneous
eigenvalues $\{a_i\}_{i =1}^r$ and $\{b_i\}_{i = 1}^r$ to ensure that
the proposal follows the possibly relatively tight geometry of the posterior.
In the complement of the LIS, where the posterior does not differ
significantly from the prior, we can use the original pCN proposal and
set $\{a_i\}_{i>r}$  and $\{b_i\}_{i > r}$ to some constant values
$a_\perp$ and $b_\perp$, respectively.
Further details on the computation of the LIS basis and the choice of
eigenvalues will be discussed in the multilevel context in later
sections.

\subsection{Posterior discretisation and bias-variance decomposition} 
\label{sec:discretised_post}
When the forward model involves a partial/ordinary differential
equation and the parameter is defined as a
spatial/temporal stochastic process, 
it is necessary in practice to discretise 
the parameter and the forward
model using appropriate numerical methods.

A common way to discretise the parameter is the Karhunen--Lo\'{e}ve
expansion, which also serves the purpose of the whitening transform. 
Given the prior mean $m_0(x)$ and the prior covariance $\prcov$, we
express the unknown parameter $u$ as the linear combination of the first $R$
eigenfunctions $ \{\phi_1, \ldots, \phi_{R}\}$ of the eigenvalue
problem $\prcov \phi_j = \omega_j \phi_j$, such that
\begin{equation}\label{eq:KL}
\textstyle u_R(x) = m_0(x) + \sum_{j = 1}^R \sqrt{\omega_j} \,\phi_j(x)\, v_j, \quad x \in \Omega.
\end{equation}
The discretised prior $p_R(\bv)$ associated with the random
coefficients $\bv = [v_1, \ldots, v_R]^\top$ is Gaussian with zero
mean and covariance equal to the $R\times R$ identity matrix $\cI_R$. \blue{In this context, the selection of the truncation dimension $R$ is typically based on the rate of decay of the eigenvalues $\omega_j$, for example such that $\sum_{j=1}^R \omega_j / \sum_{j=1}^\infty \omega_j \geq \tau$ with a threshold $\tau \in (0,1)$ close to one. In this way, the truncated representation encapsulates a specific percentage of the total prior variance.}

We discretise the forward model using a numerical method, such as finite elements or finite differences, with $M$ degrees of freedom, which yields a discretised forward model $F_{R,M}$ mapping from the discretised coefficients $\bv$ to the observables. 
In this way, the posterior measure \eqref{eq:post} can be discretised,
leading to the finite-dimensional density 
\begin{equation}
\pi_{R,M}(\bv | \data) \propto \exp( -\potential_{R,M}(\bv; \data) )\, p_R(\bv),
\end{equation}
where 
\[
\potential_{R,M}(\bv; \data) = \frac{1}{2} \big(\data-\forward_{R,M}(\bv)\big)^\top \bGobsi \big(\data-\forward_{R,M}(\bv)\big)
\]
is the discretised data-misfit function. 
Correspondingly, we also define the discretised QoI $Q_{R,M}(\bv)$, which
maps the discretise coefficient vector $\bv$ to the discretised QoI. 

The discretised parameters and forward models can be indexed by the
discretisation level. We consider a hierarchy of $L+1$ levels of
discretised parameter spaces with dimensions $R_0 \!\leq\! R_1
\!\leq\! \ldots \!\leq\! R_L$ and a hierarchy of discretised forward
models with $M_0 \!\leq\! M_1 \!\leq\! \ldots\!\leq\! M_L$ degrees of freedom.
Discretised parameter, forward model and QoI on level $\ell$ are
denoted by 
$$
\bv_\ell = [v_1, \ldots, v_{R_\ell}]^\top, \quad F_\ell(\bv_\ell)
\equiv  F_{R_\ell,M_\ell}(\bv_\ell) \ \ \text{and} \ \
Q_\ell(\bv_\ell) \equiv  Q_{R_\ell,M_\ell}(\bv_\ell), 
$$
respectively. 
Thus, the discretised data-misfit function, prior and posterior on
level $\ell$ are
\begin{equation}
\potential_\ell(\bv_\ell; \data) \!\equiv\! \potential_{R_\ell,M_\ell}(\bv_\ell; \data), \;\; 
p_\ell(\bv_\ell) \!\equiv\! p_{R_\ell}(\bv_\ell) , \; \textrm{and} \;\;
\pi_\ell(\bv_\ell | \data) \!\equiv\! \pi_{R_\ell,M_\ell}(\bv_\ell | \data),
\end{equation}
respectively, with the associated posterior expectation $\Ev_{\pi_\ell}\big[Q_\ell\big] \equiv \Ev_{ \rvl \sim \pi_\ell} \big[ Q_\ell^{}(\rvl^{}) \big]$.

\begin{assumption}\label{assum_bias}
\begin{enumerate}
\item[(i)] The bias of the posterior expectation on level
$\ell$ can be bounded in terms of the number of degrees of freedom of the forward model such that
\begin{equation}\label{eq:bias_hp}
\big|\Ev_{\mu_y}\big[Q\big] - \Ev_{\pi_\ell}\big[Q_\ell\big] \big|  = \cO(M_\ell^{-\vartheta_{\rm b}}), 
\end{equation}
for some constant $\vartheta_{\rm b} >0$. 
\item[(ii)] For the computational cost of carrying out one step of MCMC (including a
forward model simulation) it is assumed that there exists a constant
$\vartheta_{\rm c} > 0$ such that
\begin{equation}\label{eq:cost_hp}
C_\ell = \cO(M_\ell^{\vartheta_{\rm c}}).
\end{equation}
\end{enumerate}
\end{assumption}

Implicitly, Condition (\romannumeral 1) in Assumption \ref{assum_bias} also assumes that $R_\ell$ is sufficiently large such that on level $\ell$ the bias due to parameter approximation is dominated by the error due to the forward model approximation.
This condition can be verified for certain classes of model problems. For instance, for finite element methods applied to elliptic PDEs (which is the model problem used in the numerical experiments of this work), the convergence analysis in \cite[Section 4.2]{MCMC:KST_2013} shows that the discretisation error satisfies that $|\Ev_{\mu_y}[Q] - \Ev_{\pi_{\ell}}[Q_{\ell}]| = \mathcal{O}(M_\ell^{-\vartheta_{\rm b}} + R_{\ell}^{-\vartheta_{\rm b}'})$ for some constants $\vartheta_{\rm b}, \vartheta_{\rm b}'>0$. Thus, by choosing $R_\ell = M_\ell^{\vartheta_{\rm b}/\vartheta_{\rm b}'}$, the two error contributions are balanced. The constant $\vartheta_{\rm c}$ in Condition (\romannumeral 2) of Assumption \ref{assum_bias} depends on the underlying linear solver and/or numerical integrator, so that a theoretical upper bound on $\vartheta_{\rm c}$ is often known. 

Consider discretisation level $L$ and let
$\{\bV_{\!\!L}^{(j)}\}_{j= 1}^{N_{\textnormal{MC}}}$ be a Markov
chain produced by a MCMC algorithm converging in distribution to
$\pi_L$. An estimate for the expectation $\Ev_{\pi_L}\big[Q_L\big]$ is
\begin{equation}\label{eq:YMC}
Y^{\textnormal{MC}} \equiv \frac{1}{N_{\textnormal{MC}}}
{\textstyle \sum_{j=1}^{N_{\textnormal{MC}}} } Q_L^{}
(\bV_{\!\!L}^{(j)}) \ \approx \ \Ev_{\pi_L}\big[Q_L\big]\,.
\end{equation}
The focus of this work is the asymptotic
performance of algorithms, and hence the initialization bias of
  MCMC and the computational cost due to burn-in are not
discussed. The mean-squared-error (MSE) of the Monte Carlo estimator
\eqref{eq:YMC} allows a bias-variance decomposition of the form
\begin{equation}
\label{eq:bias-var-decomp}
\textnormal{MSE}(Y^{\textnormal{MC}}) = \underbrace{\Big| \Ev_{\mu_y}\big[Q\big] - \Ev_{\pi_L}\big[Q_L\big] \Big|^2}_{\rm Square\;of\;Bias} + \underbrace{\Var_{\pi_L}(Q_L) \Big/ N^{\textnormal{eff}}_{\textnormal{MC}}}_{\Var(Y^{\textnormal{MC}})},
\end{equation}
where $N^{\textnormal{eff}}_{\textnormal{MC}}$ is the effective sample
size of the Markov chain $\{\bV_{\!\!L}^{(j)}\}_{j= 1}^{N_{\textnormal{MC}}}$. 
This effective sample size is proportional to the total sample size,
i.e., $N_{\textnormal{MC}}^{\textnormal{eff}} = N_{\textnormal{MC}}/
\tau_{\textnormal{MC}}$, where $\tau_{\textnormal{MC}} \geq 1$ is the
integrated autocorrelation time (IACT) of the Markov
chain. The work of \cite{hairer2014spectral} shows that for pCN-type algorithms the IACT $\tau_{\textnormal{MC}}$ is dimension-independent given a local Lipschitz assumption, which is often satisfied for inverse problems governed by elliptic PDEs. 

Choosing $N_{\textnormal{MC}}$ such that the two terms in 
\eqref{eq:bias-var-decomp} of the MCMC estimator are balanced and
using Assumption \ref{assum_bias}, the
total  computational cost to  achieve $\textnormal{MSE}(Y^{\textnormal{MC}}) <
\varepsilon^2$ is
\begin{equation}\label{eq:MCMCcompcost}
C^{\textnormal{MC}} = \cO(\tau_{\textnormal{MC}}\,\varepsilon^{-2-\vartheta_{\rm c}/\vartheta_{\rm b} })\,.
\end{equation}
Thus, one of the key aims in accelerating MCMC sampling is to reduce $\tau_{\textnormal{MC}}$,
which can be achieved, e.g., via DILI MCMC proposals. In addition, the multilevel
method will allow us to improve the asymptotic rate of growth of the
cost of the standard MCMC estimator in \eqref{eq:MCMCcompcost} with
respect to $\varepsilon$, as well as to further reduce $\tau_{\textnormal{MC}}$ on the
higher levels. These two things are achieved in multilevel
MCMC, by using coarse level samples as proposals on the higher levels and by dealing
with the high numerical correlation between subsequent MCMC samples
produced by standard proposal mechanisms on the coarsest level (level
zero).  Thus, most samples are drawn on the computationally least
costly level zero, as well as shifting most of the work for removing the
initialization bias to level zero, all contributing to the practical
advatages of the multilevel algorithms compared to their single-level counterparts.


\section{Multilevel MCMC}\label{sec_MLMCMC}
By exploiting the hierarchy of posteriors, the rate of the
computational cost in \eqref{eq:MCMCcompcost} can be reduced significantly
using the multilevel idea in \cite{MCMC:KST_2013}. We
expand the  posterior
expectation in the telescoping sum
\begin{equation}\label{tel_sum}
\textstyle \Ev_{\pi_L}[Q_L^{}] = \Ev_{\pi_0}[Q_0^{}] + \sum_{\ell=1}^L\Big(\Ev_{\pi_\ell}[Q_\ell^{}]-\Ev_{\pi_{\mell}}[Q_{\mell}^{}]\Big)\,.
\end{equation}

For level zero, the sample set $\{\! \bV_{\!\!0}^{(0,\,j)}
\!\}_{j=1}^{N_0}$ is assumed to be drawn via some MCMC method that
converges to $\pi_0(\,\cdot\,|\data)$ and the first term in the
telescoping sum \eqref{tel_sum} is estimated via
\begin{align*}
Y_0 \equiv \frac{1}{N_0}{\textstyle\sum_{j=1}^{N_0}} D_0^{(j)} \
  \approx \  \Ev_{\pi_0}[Q_0], \quad \text{where} \quad D_0^{(j)} = Q^{}_0(\bV_{\!\!0}^{(0,\,j)}).
\end{align*}

Since the two expectations in the difference
$\Ev_{\pi_\ell}[Q_\ell^{}]-\Ev_{\pi_{\mell}}[Q_{\mell}^{}]$ are with
respect to different discretisations of the posterior, special
treatment is required for $\ell > 0$.
Let $\Delta_{\ell, \mell}(\bv_{\ell},
\bv_{\mell})$ be the joint density of $\bv_{\ell}$ and $\bv_{\mell}$ such that 
\begin{equation}\label{eq:joint_coupling}
\quad \int \!\!\Delta_{\ell, \mell}(\bv_{\ell}, \bv_{\mell})
\,d\bv_{\mell} \!=\! \pi_\ell(\bv_{\ell}| \data) \ \ \textrm{and\ } \ 
\int\!\! \Delta_{\ell, \mell}(\bv_{\ell}, \bv_{\mell}) \, d \bv_{\ell}
\!=\! \pi_{\mell}(\bv_{\mell}| \data), 
\end{equation}
that is, the posteriors $\pi_\ell(\bv_{\ell}| \data)$ and
$\pi_{\mell}(\bv_{\mell}| \data)$ are the two marginals. Then, the
difference between expectations can be expressed as
\begin{equation}
\Ev_{\pi_\ell}[Q^{}_\ell]-\Ev_{\pi_{\mell}}[Q^{}_{\mell}] =
\Ev_{\Delta_{\ell, \mell}} [ D_\ell ] , \quad \textrm{where} \quad  D_\ell = Q_\ell(\rvl) - Q_{\mell}(\rvml)
\end{equation}
and $(\rvl, \rvml ) \sim \Delta_{\ell, \mell}(\cdot, \cdot)$.
The construction of the joint density and the associated sampling
procedure will be critical to reduce the computational complexity.

Suppose the samples $\big\{ \big(\rvl^{(\ell, j)}, \rvml^{(\ell,
  j)}\big)\big\}_{j=1}^{N_\ell}$ form a Markov chain that converges in
distribution to $\Delta_{\ell, \mell}(\cdot, \cdot)$ and 
\[
D_\ell^{(j)} = Q^{}_\ell\big(\rvl^{(\ell,j)}\big)-Q_{\mell}^{}\big(\rvml^{(\ell,j)}\big).
\]
Then, the remaining terms in \eqref{tel_sum}, for $\ell =1, \ldots,L$, are estimated by 
\begin{align*}
Y_\ell \equiv \frac{1}{N_\ell}{\textstyle\sum_{j=1}^{N_\ell}}
  D_\ell^{(j)}  \ \approx \ \Ev_{\pi_\ell}[Q^{}_\ell]-\Ev_{\pi_{\mell}}[Q^{}_{\mell}]
\end{align*}
and the multilevel MCMC estimator for $\Ev_{\pi_L}[Q_L]$ is defined by
\begin{align}
\quad\;\; \textstyle \Ev_{\pi_L} \big[ Q_L^{} \big] \approx Y^{\textnormal{ML}} &\equiv {\textstyle\sum_{\ell=0}^L} Y_\ell\,,
\end{align}
The mean square error of this estimator can again be decomposed as follows:
\begin{align}
\textnormal{MSE}(Y^{\textnormal{ML}}) & \equiv  \underbrace{\big|\Ev_{\mu_y}\big[Q\big] - \Ev_{\pi_{L}}[Q_L]\big|^2}_{\rm Square\;of\;Bias} + \underbrace{\textstyle  \sum_{\ell=0}^L \big( \Var(Y_\ell) + \, \sum_{k \neq \ell}^L \Cov( Y_\ell, Y_k ) \big)}_{\Var(Y^{\textnormal{ML}})}. \label{MSE_ML}
\end{align}

\subsection{Variance management}

For optimal efficiency, we now choose the numbers of samples $N_\ell$,
$\ell=0,\ldots,N$, such as to minimise $\Var(Y^{\textnormal{ML}})$ for fixed computational effort. This includes the {\it within-level variance} $\Var(Y_\ell)$ and the {\it cross-level variance} $\Cov( Y_\ell, Y_k )$ for $k \neq \ell$. 
We will provide justifications on managing these variances using the following assumptions.

\begin{remark}\label{remark:var_cov}
Suppose the effective sample sizes are proportional to the total
sample sizes, i.e.,
$N_\ell^{\textnormal{eff}} = N_\ell \big/ \tau_{\ell}$, for all
$\ell$, where $\tau_{\ell} \geq 1$ is the IACT of the Markov chain $D_\ell^{(j)}$. Then, the within-level variance has the form
\begin{equation}\label{var_ML}
\Var(Y_\ell) = \frac{1}{N_\ell^{\textnormal{eff}}} \Var_{\Delta_{\ell,\mell}}(D_\ell) =  \frac{\tau_{\ell}}{N_\ell} \Var_{\Delta_{\ell,\mell}}(D_\ell) , 
\end{equation}
where we set $\Var_{\Delta_{0,-1}}(D_0) = \Var_{\pi_0}(Q_0)$ and have
\begin{align*}
\Var_{\Delta_{\ell,\mell}}(D_\ell) = \Var_{\pi_{\ell}}(Q_\ell) + \Var_{\pi_{\mell}}(Q_{\mell}) - 2\textrm{Cov}_{\Delta_{\ell,\mell}}(Q_\ell, Q_{\mell}) \geq 0, \;\; \forall \ell > 0,
\end{align*}
by the Cauchy--Schwarz inequality. 
Thus, to reduce $\Var(Y_\ell)$, the joint density should be constructed
in such a way that $\textrm{Cov}_{\Delta_{\ell,\mell}}(Q_\ell,
Q_{\mell})$ is {\it positive} and (if possible) maximised. In
addition, the MCMC
simulation should be made {\it statistically efficient} in the sense that $\tau_{\ell}$ is as close to one as possible.
\end{remark}

\begin{assumption}\label{assum_within}
The variance $\Var_{\Delta_{\ell,\mell}}(D_\ell)$ converges to zero as
$M_\ell \rightarrow \infty$ and 
\begin{equation}
\Var_{\Delta_{\ell,\mell}}(D_\ell) = \cO(M_\ell^{-\vartheta_{\rm v} })\,, 
\end{equation}
for some constant $\vartheta_{\rm v} > 0$.
\end{assumption}

\begin{proposition}\label{assum_cross}
Suppose that there exists an $r < 1$ such that 
\begin{equation}\label{eq:var_cov_assum}
\frac{\Cov( Y_\ell, Y_k )}{\max\{\Var(Y_\ell), \Var(Y_k)  \}} \leq
r^{|k-l|}, \quad \text{for all} \ \ k
\neq \ell,
\end{equation}
i.e., the cross-level covariance is insignificant compared to the
within-level variance. Then
\begin{equation}\label{eq:var_bound}
\Var(Y^{\textnormal{ML}}) = \sum_{\ell=0}^L  \Big(
\Var(Y_\ell) + \, \sum_{k \neq \ell}^L \Cov( Y_\ell, Y_k ) \Big) \le 
\frac{1+r}{1-r} \sum_{\ell=0}^L \Var(Y_\ell) \,.
\end{equation}
\end{proposition}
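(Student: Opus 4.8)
The plan is to bound the double sum $\sum_{\ell=0}^L \sum_{k \neq \ell}^L \Cov(Y_\ell, Y_k)$ by a geometric series in the within-level variances $\Var(Y_\ell)$, using the hypothesis \eqref{eq:var_cov_assum}. First I would apply the assumed inequality to each covariance term: since $\Cov(Y_\ell, Y_k) \le r^{|k-\ell|} \max\{\Var(Y_\ell), \Var(Y_k)\}$, and $\max\{a,b\} \le a+b$ for nonnegative $a,b$ (or, to keep the constant sharp, bound $\max\{\Var(Y_\ell),\Var(Y_k)\}$ by symmetrising), we get
\[
\sum_{\ell=0}^L \sum_{k \neq \ell}^L \Cov(Y_\ell, Y_k) \le \sum_{\ell=0}^L \sum_{k \neq \ell}^L r^{|k-\ell|}\, \max\{\Var(Y_\ell), \Var(Y_k)\}.
\]

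Next I would reorganise the right-hand side so that each $\Var(Y_\ell)$ carries a coefficient that is a partial geometric sum. By the symmetry of the summand under $\ell \leftrightarrow k$ and the bound $\max\{\Var(Y_\ell),\Var(Y_k)\} \le \Var(Y_\ell) + \Var(Y_k)$, we may split and relabel to obtain $\sum_{\ell}\sum_{k\neq\ell} r^{|k-\ell|}\max\{\Var(Y_\ell),\Var(Y_k)\} \le 2 \sum_{\ell=0}^L \Var(Y_\ell) \sum_{k \neq \ell} r^{|k-\ell|}$. For each fixed $\ell$, the inner sum satisfies $\sum_{k \neq \ell} r^{|k-\ell|} \le 2 \sum_{j=1}^\infty r^j = \frac{2r}{1-r}$. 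Combining, the cross term is at most $\frac{4r}{1-r}\sum_\ell \Var(Y_\ell)$, whence $\Var(Y^{\textnormal{ML}}) \le \big(1 + \frac{4r}{1-r}\big)\sum_\ell \Var(Y_\ell) = \frac{1+3r}{1-r}\sum_\ell \Var(Y_\ell)$. To recover the sharper constant $\frac{1+r}{1-r}$ stated in the proposition, I would instead avoid the crude $\max \le$ sum step: order the pair so that $\Var(Y_\ell) \ge \Var(Y_k)$ and charge $r^{|k-\ell|}\Var(Y_\ell)$ to the larger index's variance only once per ordered pair, so that each $\Var(Y_\ell)$ picks up $\sum_{j \ge 1} r^j$ from each side but without the factor $2$ from symmetrisation — i.e. write $\sum_{\ell \neq k} r^{|k-\ell|}\max\{\Var(Y_\ell),\Var(Y_k)\} = 2\sum_{\ell < k} r^{k-\ell}\max\{\cdots\} \le 2 \sum_{\ell}\Var(Y_\ell)\sum_{j\ge1} r^j = \frac{2r}{1-r}\sum_\ell \Var(Y_\ell)$, giving $\Var(Y^{\textnormal{ML}}) \le \big(1 + \frac{2r}{1-r}\big)\sum_\ell\Var(Y_\ell) = \frac{1+r}{1-r}\sum_\ell\Var(Y_\ell)$.

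The main obstacle is bookkeeping: making sure the double sum over the off-diagonal indices is charged to the within-level variances without double-counting in a way that produces exactly the constant $\frac{1+r}{1-r}$ rather than a weaker one. The key identity is that $\sum_{\ell<k} r^{k-\ell} \max\{\Var(Y_\ell),\Var(Y_k)\}$, upon assigning each term to whichever of $\ell,k$ has the larger variance and summing the geometric weights over the "distance" $|k-\ell| \in \{1,\dots,L\}$, yields at most $\frac{r}{1-r}\sum_\ell \Var(Y_\ell)$; doubling for the symmetric half $\ell > k$ and adding the diagonal $\sum_\ell \Var(Y_\ell)$ then gives the claim. One should also note the inequality $\Cov(Y_\ell,Y_k) \le \frac{1+r}{1-r}\cdots$ only needs the upper bound in \eqref{eq:var_cov_assum}; negativity of some covariances would only help, so no lower bound is required. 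Finally, since $r < 1$, all the geometric series converge and the constant $\frac{1+r}{1-r}$ is finite and independent of $L$, which is the point of the estimate.
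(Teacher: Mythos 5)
Your first, ``crude'' computation is correct: bounding $\max\{\Var(Y_\ell),\Var(Y_k)\}\le\Var(Y_\ell)+\Var(Y_k)$ and symmetrising gives $\Var(Y^{\textnormal{ML}})\le\frac{1+3r}{1-r}\sum_\ell\Var(Y_\ell)$, which is a valid $L$-independent bound and would in fact suffice for Theorem \ref{thm:mlmcmc}. The gap is in the refinement that is supposed to recover the constant $\frac{1+r}{1-r}$: the step
\[
\sum_{\ell<k}r^{k-\ell}\max\{\Var(Y_\ell),\Var(Y_k)\}\;\le\;\Big(\sum_{j\ge1}r^j\Big)\sum_{\ell}\Var(Y_\ell)
\]
is false as an inequality about arbitrary nonnegative numbers. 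Charging each unordered pair $\{\ell,k\}$ to whichever index carries the larger variance does not guarantee that the geometric weights accumulated by a single index are distinct powers of $r$: one index can be the argmax for the pair at distance $j$ on its left \emph{and} for the pair at distance $j$ on its right, so it can collect $r^j$ twice for every $j$. Concretely, with $L=2$, $\Var(Y_0)=\Var(Y_2)=0$ and $\Var(Y_1)=1$, the left-hand side equals $2r$ while the right-hand side equals $r/(1-r)$, and $2r>r/(1-r)$ whenever $r<1/2$. So your bookkeeping only delivers $2\sum_{j\ge1}r^j$ per index, i.e.\ exactly the crude constant again.

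The paper's proof gets the sharper constant differently: it first relabels the levels so that $\Var(Y_0)\ge\Var(Y_1)\ge\dots\ge\Var(Y_L)$. Then for every pair $\ell<k$ the maximum is $\Var(Y_\ell)$, each $\Var(Y_\ell)$ is charged only by the pairs $(\ell,k)$ with $k>\ell$, and these have pairwise \emph{distinct} distances $1,\dots,L-\ell$, so the accumulated weight is at most $\sum_{j\ge1}r^j=r/(1-r)$ and $\frac{1+r}{1-r}$ follows. (Be aware that this reordering tacitly assumes \eqref{eq:var_cov_assum} continues to hold for the permuted indices, since $|k-\ell|$ is not invariant under relabelling; if one insists on the hypothesis only in the given ordering, your weaker constant $\frac{1+3r}{1-r}$ is what can honestly be extracted, and it serves the complexity analysis equally well.) To obtain the sharp constant without relabelling you would need some monotonicity of the variances across levels, not merely the decay of the normalised covariances.
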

\begin{proof}
Without loss of generality, we can assume the variances $\{ \Var(Y_\ell) \}_{\ell=0}^L$ are ordered as $\Var(Y_\ell) \geq \Var(Y_k)$ for $\ell < k$. Then we have the bound
\begin{align*}
\textstyle \Var(Y^{\textnormal{ML}}) & = \sum_{\ell=0}^L  \Big( \Var(Y_\ell) + 2 \,\sum_{k > \ell}^L \Cov( Y_\ell, Y_k ) \Big) \\
& \le \sum_{\ell=0}^L  \Var(Y_\ell) \Big( 1 + 2 \, \sum_{k > \ell}^\infty \frac{\Cov( Y_\ell, Y_k )}{\Var(Y_\ell)} \Big) \\
& \le \sum_{\ell=0}^L  \Var(Y_\ell) \Big( 1 + 2 \, \sum_{k >
  \ell}^\infty r ^{(k-\ell)} \Big) \ = \ \frac{1+r}{1-r} \sum_{\ell=0}^L  \Var(Y_\ell).
\end{align*}

\end{proof}

Using Proposition \ref{assum_cross} and \eqref{var_ML}, the variance of the multilevel estimator satisfies
\[
\Var(Y^{\textnormal{ML}}) = \cO \Big( {\textstyle \sum_{\ell=0}^L}  \frac{\tau_{\ell} }{ N_\ell } \Var_{\Delta_{\ell,\mell}}(D_\ell) \Big).
\]
The total computational cost is $C^{\textnormal{ML}} = \sum_{\ell=0}^L N_\ell\,C_\ell$.
This way, for a fixed variance, the computational cost is minimised by choosing the sample size
\begin{align}\label{n_l}
\textstyle N_\ell \propto \sqrt{ \tau_{\ell}\,\Var_{\Delta_{\ell,\mell}}(D_\ell) \big/ C_\ell },
\end{align}
which leads to a total computational cost that satisfies
\begin{align}\label{cost_L}
C^{\textnormal{ML}} \propto {\sum_{\ell=0}^L} \sqrt{ \tau_{\ell}\,C_\ell\,\Var_{\Delta_{\ell,\mell}}(D_\ell) }.
\end{align}

\begin{theorem}\label{thm:mlmcmc}
Suppose Assumptions \ref{assum_bias}, \ref{assum_within} and \eqref{eq:var_cov_assum} in Proposition \ref{assum_cross} hold. For the multilevel MCMC estimator to satisfy
$\textnormal{MSE}(Y^{\textnormal{ML}}) < \varepsilon^2$, the
multilevel MCMC with $N_\ell$ chosen as in \eqref{n_l} requires an
overall computational cost
\begin{equation}\label{cost_ML}
C^{\textnormal{ML}} = \,\begin{cases}
\cO(\varepsilon^{-2}) &\mbox{if } \vartheta_{\rm v}>\vartheta_{\rm c}\\
\cO(\varepsilon^{-2}|\log\varepsilon|^2) &\mbox{if } \vartheta_{\rm v}=\vartheta_{\rm c} \\
\cO(\varepsilon^{-2-(\vartheta_{\rm c} - \vartheta_{\rm v})/\vartheta_{\rm b} }) &\mbox{if } \vartheta_{\rm v}<\vartheta_{\rm c}
\end{cases}\,.
\end{equation}
\end{theorem}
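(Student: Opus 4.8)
The plan is to follow the classical multilevel Monte Carlo complexity argument (Giles), adapted to the Markov chain setting through the variance bound of Proposition~\ref{assum_cross} and the effective-sample-size identity \eqref{var_ML}. I use two standard ingredients that are implicit in the hypotheses: that the forward-model resolutions grow geometrically across levels, i.e.\ $M_\ell = \cO(s^\ell)$ and $s^\ell = \cO(M_\ell)$ for some fixed $s > 1$; and that the IACTs are bounded uniformly, $\tau_\ell \le \tau_{\max} < \infty$ for all $\ell$, which is precisely the dimension-independence delivered by the DILI proposals of Theorem~\ref{theo:1} (and, at the coupled level, by the constructions of the later sections). Starting from \eqref{MSE_ML}, it suffices to drive the squared bias and the variance each below $\varepsilon^2/2$.

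\emph{Bias and choice of $L$.} By Assumption~\ref{assum_bias}(i) the squared bias is $\cO(M_L^{-2\vartheta_{\rm b}})$, so taking the finest level $L$ to be the smallest index with $M_L \ge c\,\varepsilon^{-1/\vartheta_{\rm b}}$ for a suitable constant $c$ makes this term at most $\varepsilon^2/2$. By the geometric growth this gives $L = \cO(|\log\varepsilon|)$ and $M_L = \cO(\varepsilon^{-1/\vartheta_{\rm b}})$.

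\emph{Variance, optimal allocation and cost.} Proposition~\ref{assum_cross} together with \eqref{var_ML} and Assumption~\ref{assum_ergodic} gives $\Var(Y^{\textnormal{ML}}) = \cO\big(\sum_{\ell=0}^L \tau_\ell N_\ell^{-1}\Var_{\Delta_{\ell,\mell}}(D_\ell)\big)$. Minimising the cost $C^{\textnormal{ML}} = \sum_{\ell=0}^L N_\ell C_\ell$ over $N_0,\dots,N_L$ subject to this variance being $\le \varepsilon^2/2$ is a one-line Lagrange-multiplier computation whose solution is the allocation \eqref{n_l}; substituting \eqref{n_l} back produces the product relation $\Var(Y^{\textnormal{ML}})\cdot C^{\textnormal{ML}} = \cO\big( (\sum_{\ell=0}^L \sqrt{\tau_\ell\,C_\ell\,\Var_{\Delta_{\ell,\mell}}(D_\ell)})^2 \big)$, and hence, imposing $\Var(Y^{\textnormal{ML}}) = \varepsilon^2/2$, the bound $C^{\textnormal{ML}} = \cO\big( \varepsilon^{-2} (\sum_{\ell=0}^L \sqrt{\tau_\ell\,C_\ell\,\Var_{\Delta_{\ell,\mell}}(D_\ell)})^2 \big)$, as anticipated in \eqref{cost_L}. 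Now insert $C_\ell = \cO(M_\ell^{\vartheta_{\rm c}})$ from Assumption~\ref{assum_bias}(ii), $\Var_{\Delta_{\ell,\mell}}(D_\ell) = \cO(M_\ell^{-\vartheta_{\rm v}})$ from Assumption~\ref{assum_within}, and $\tau_\ell = \cO(1)$, so that the $\ell$-th summand is $\cO(M_\ell^{(\vartheta_{\rm c}-\vartheta_{\rm v})/2}) = \cO(s^{(\vartheta_{\rm c}-\vartheta_{\rm v})\ell/2})$. Summing this geometric series separates into the three regimes: if $\vartheta_{\rm v} > \vartheta_{\rm c}$ the series converges, the sum is $\cO(1)$, and $C^{\textnormal{ML}} = \cO(\varepsilon^{-2})$; if $\vartheta_{\rm v} = \vartheta_{\rm c}$ the sum has $\cO(L) = \cO(|\log\varepsilon|)$ terms of equal order, giving $C^{\textnormal{ML}} = \cO(\varepsilon^{-2}|\log\varepsilon|^2)$; and if $\vartheta_{\rm v} < \vartheta_{\rm c}$ the top level dominates, the sum is $\cO(M_L^{(\vartheta_{\rm c}-\vartheta_{\rm v})/2}) = \cO(\varepsilon^{-(\vartheta_{\rm c}-\vartheta_{\rm v})/(2\vartheta_{\rm b})})$ by the choice of $L$, and squaring yields $C^{\textnormal{ML}} = \cO(\varepsilon^{-2-(\vartheta_{\rm c}-\vartheta_{\rm v})/\vartheta_{\rm b}})$.

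The routine loose ends are rounding each $N_\ell$ up to a positive integer --- which inflates the cost only by the additive $\sum_{\ell=0}^L C_\ell = \cO(M_L^{\vartheta_{\rm c}})$, of lower order under the usual mild relation between the exponents implicit in the statement --- and invoking Proposition~\ref{assum_cross}, which requires the cross-level covariance hypothesis \eqref{eq:var_cov_assum} with some $r<1$. The one genuinely non-classical point, and the main obstacle, is the uniform bound $\tau_\ell = \cO(1)$: this is not automatic for the coupled chains targeting $\Delta_{\ell,\mell}$, and securing it is exactly the purpose of the hierarchical-LIS DILI proposals developed in the later sections; everything else is bookkeeping on geometric sums.
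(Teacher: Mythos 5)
Your proof is correct and follows exactly the route the paper takes: the paper's own proof is a one-line deferral to the standard multilevel Monte Carlo complexity theorems of Giles and Cliffe--Giles--Scheichl--Teckentrup, and your argument is precisely the content of those theorems (choice of $L$ from the bias, Lagrange allocation giving \eqref{n_l}, geometric-sum case analysis on $\sqrt{C_\ell\,\Var_{\Delta_{\ell,\mell}}(D_\ell)}$) adapted to the Markov-chain setting via \eqref{var_ML} and Proposition~\ref{assum_cross}. You are also right to flag the two hypotheses the paper leaves implicit --- geometric growth of $M_\ell$ (stated only later, in Assumption~\ref{assum:LIS}) and the uniform bound $\tau_\ell = \cO(1)$, which is indeed what the DILI machinery of the subsequent sections is designed to deliver rather than something the theorem's stated assumptions guarantee.
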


\begin{proof}
Follows directly from the multilevel Monte
Carlo complexity theorems in \cite {cgst11,Sto:Giles_2008}.
\end{proof}

It is difficult to rigorously verify Assumption
\eqref{eq:var_cov_assum} in Proposition \ref{assum_cross}, but it
is often observed that the cross-level variances $\Cov( Y_\ell, Y_k )$
rapidly decay to zero in practice, as the Markov chains used for
computing $Y_\ell$ and $Y_k$ with $\ell \neq k$ are statistically
independent. For example, in
\cite{MIMCMC:JKLZ_2018} independent Markov chains are constructed and in \cite{MCMC:KST_2013} a subsampling strategy of the coarser chains is employed to ensure independence.
Nevertheless, the bound on
the computational complexity of multilevel MCMC is reduced under
assumption \eqref{eq:var_cov_assum} compared to that presented in~\cite{MCMC:KST_2013}, which 
has an extra $|\log\varepsilon|$ factor. 
For any positive values of $\vartheta_{\rm b},\vartheta_{\rm v},\vartheta_{\rm c}$, the multilevel MCMC approach asymptotically requires less computational effort than
single-level MCMC. 
To choose optimal numbers of samples on the various levels, estimates of the
IACTs $\tau_\ell$, the variances
$\Var_{\Delta_{\ell,\mell}}(D_\ell)$, and the computational costs
$C_\ell$ are needed. Such quantities may not be known {\it a priori},
but they can all be obtained and adaptively improved (on the fly) as
the simulation progresses.

\subsection{Notations}\label{sec:notation}
To map vectors and matrices across adjacent levels of discretisation
we define the following notation. 
Given the canonical basis $(\hat \be_1, \hat\be_2, \dots,
\hat\be_{R_\ell})$ of the parameter space at level $\ell$, where
$\hat\be_j \in \R^{R_\ell}$, we define the basis matrices
$\Theta_{\ell, c} \equiv (\hat\be_1, \hat\be_2,\dots,
\hat\be_{R_{\mell}})$ and $\Theta_{\ell, f} \equiv (\hat
\be_{R_{\mell}+1}, \dots, \hat \be_{R_\ell})$, which correspond to the
parameter coefficients `active' at level $\mell$ and the additional coefficients. \blue{Here the subscripts $c$ and $f$ denote the coefficients that are 'active' on the coarse level and the coefficients that are 'active' only on the fine level, respectively. }
We can split the parameter $\bv_\ell$ into two components 
\begin{equation}
\bv_\ell = \begin{bmatrix} \bv_{\ell, c} \\ \bv_{\ell, f}\end{bmatrix}, \;\; \textrm{where}\;\; \bv_{\ell, c} =  \Theta_{\ell, c}^\top \bv_\ell \;\;\textrm{and}\;\; \bv_{\ell, f} =  \Theta_{\ell, f}^\top \bv_\ell,
\end{equation}
which correspond to the coefficients on the previous level $\mell$ and the additional coefficients.
Given a matrix $\cA_\ell \in \R^{R_\ell \times R_\ell}$, we partition the matrix as
\begin{equation}
\cA_\ell=\begin{bmatrix}
\cA_{\ell,cc} &\cA_{\ell,cf}\\
\cA_{\ell,fc} &\cA_{\ell,\ff}
\end{bmatrix},
\end{equation}
where $\cA_{\ell,cc} \equiv  \Theta_{\ell, c}^\top \cA_\ell
\Theta_{\ell, c}$ and $\cA_{\ell,\ff}$, $\cA_{\ell,fc}$ and
$\cA_{\ell,cf}$ are defined analogously.
The matrices $\Theta_{\ell, c}$ and $\Theta_{\ell, f}$ are never 
constructed explicitly. Operations with those matrices only involve the selection
of the corresponding rows or columns of the matrix or vector.


\newcommand{\lis}{{\boldsymbol \psi}}
\renewcommand{\bphi}{{\boldsymbol \phi}}

\section{Multilevel LIS}\label{sec_MLLIS}

\blue{We aim to employ the DILI method (cf. Section~\ref{sec_DILI}) as the proposal mechanism for multilevel MCMC. Since the computation of the LIS basis used by the DILI proposal can be costly, here we develop a Rayleigh--Ritz procedure to recursively compute new, multilevel likelihood-informed subspaces using the model hierarchy. The resulting hierarchical LIS basis can be used to generalise DILI proposals to the multilevel setting and to improve the efficiency of multilevel MCMC sampling. In Section \ref{sec:lis_setup}, we define the concept of LIS in the multilevel context. In Sections  \ref{sec:lis_base} and \ref{sec:lis_enrich}, we present the recursive construction of the 
multilevel LIS using the Rayleigh--Ritz procedure.}

\subsection{Setup}\label{sec:lis_setup}
For each level $\ell \in \{0, 1, \ldots, L\}$, we denote the linearisation of the forward model $\forward_\ell$ at a given parameter $\bv_\ell$ by
\[
\linear_\ell(\bv_\ell) = \nabla_{\bv_\ell} \forward_\ell(\bv_\ell).
\] 
This yields the Gauss-Newton approximation of the Hessian of the data-misfit functional at $\bv_\ell$ (hereafter referred to as the Gauss--Newton Hessian) in the form of
\begin{equation}
\hessian_\ell(\bv_\ell) = \linear_\ell(\bv_\ell)^\top \obscov^{-1} \,  \linear_\ell(\bv_\ell).
\label{eq:gn_hessian}
\end{equation}
The Gauss--Newton Hessian in \eqref{eq:gn_hessian} corresponds to the Fisher information matrix of the likelihood with additive Gaussian noise. It is commonly used in statistics to measure the local sensitivity of the parameter-to-likelihood map. The leading eigenvectors of $\hessian_\ell(\bv_\ell)$ (corresponding to the largest eigenvalues) indicate parameter directions along which the likelihood function varies rapidly. 

However, to extract the global sensitivity of the
  parameter-to-likelihood map from the local sensitivity information
  contained in the Gauss--Newton Hessian, it is necessary to compute the expectation of
  $\hessian_\ell(\bv_\ell)$ with respect to some reference
  distribution $p_\ell^\ast(\bv_\ell)$, i.e.,
\begin{equation}
\expect_{\rvl^{} \sim p_\ell^\ast} \big[ \hessian_\ell(\rvl) \big].
\end{equation}
Finally,  this is approximated using the sample average with $K_\ell$ random samples drawn from the reference distribution, which yields
\begin{equation}
\expect_{\rvl^{} \sim p_\ell^\ast} \big[ \hessian_\ell(\rvl) \big] \approx \widehat\hessian_\ell \equiv \frac{1}{K_\ell} {\sum_{k = 1}^{K_\ell} } \hessian_\ell(\bv_\ell^{(k)}), \;\, \textrm{where} \;\; \bv_\ell^{(k)} \sim p_\ell^\ast(\cdot).
\label{eq:exp_Hl}
\end{equation}
Note that the matrix $\widehat\hessian_\ell$ is symmetric and positive semidefinite.
Different choices of the reference distribution, such as the prior or the posterior, lead to different ways to construct the LIS and different performance characteristics.

\begin{remark}
Following the discussion in \cite{ROM:CMW_2016,DimRedu:Zahm_etal_2018}, using the posterior as the reference leads to sharp approximation properties \cite{cui2022unified,DimRedu:Zahm_etal_2018} compared to other choices. However, the posterior exploration relies on MCMC sampling, and thus this choice requires adaptively estimating LIS during the MCMC sampling. 
The Laplace approximation to the posterior provides a reasonable
alternative in a wide range of problems where the posterior is
unimodal. We use the Laplace approximation as the reference
distribution in this work.  

The choice of the reference distribution can have an impact
on the quality of the LIS basis and on the IACT of the Markov chains
produced by DILI MCMC, but it does not affect the convergence of MCMC, 
as DILI samples the full parameter space and only uses the LIS to
reduce the IACT and thus to accelerate posterior sampling.
\end{remark}

It is often computationally infeasible to explicitly form the Gauss--Newton Hessian matrix \eqref{eq:gn_hessian}. 
However, all we need are matrix-vector-products with the Gauss--Newton Hessian
matrix. This requires only applications of the linearised forward
model $\linear_\ell(\bv_\ell)$ and its adjoint
$\linear_\ell(\bv_\ell)^\top$, which are well-established operations
in the PDE-constraint optimisation literature. We refer the readers to
recent applications in Bayesian inverse problems for further details, e.g., \cite{IP:BGMS_2013,MCMC:MWBG_2012,MCMC:Petra_etal_2014}.

\subsection{Base level LIS} \label{sec:lis_base}
At the base level, we use the samples $\{\bv_0^{(k)}\}_{k=1}^{K_0}$ drawn from the reference $p_0^\ast(\cdot)$ to construct the sample-averaged Gauss--Newton Hessian, $\widehat\hessian_0$.
Then, we use the Rayleigh quotient $\langle \boldsymbol \phi, \widehat\hessian_\ell \, \boldsymbol \phi \rangle \,/\, \langle \boldsymbol \phi, \boldsymbol \phi \rangle$ to measure the (quadratic) change in the parameter-to-likelihood map along a parameter direction $\boldsymbol \phi$.
Hence, the LIS can be identified via a sequence of optimisation problems of the form 
\begin{equation}\label{eq:Rayleigh_0}
\lis_{0,k+1} = \argmax_{\| \boldsymbol \phi \| = 1} \langle
\boldsymbol \phi, \widehat\hessian_0 \, \boldsymbol \phi \rangle, \
\textrm{\ subject\ to\ }  \ \langle \boldsymbol \phi, \lis_{0,i}
\rangle = 0, \quad \textrm{\;for\;} i = 1,  \ldots, k,
\end{equation}
where $\lis_{0,1}$ is the solution to the unconstrained optimisation problem.
The sequence of optimisation problems in \eqref{eq:Rayleigh_0} is
equivalent to finding the leading eigenvectors of 
$\widehat\hessian_0$.

\begin{definition}[Base level LIS]
Given the sample--averaged Gauss--Newton Hessian $\widehat\hessian_0$ on level $0$ and a
threshold $\eigentrunc > 0$, we solve the eigenproblem
\begin{equation}
\widehat\hessian_0 \, \lis_{0,i} = \lambda_{0,i} \lis_{0,i},
\label{eq:exp_H0}
\end{equation}
and then use the $r_0$ leading eigenvectors with eigenvalues $\lambda_{0,i} > \eigentrunc$, for $i = 1, \ldots, r_0$, to define the LIS basis $\basis_{0, r_0} = [\lis_{0,1} , \lis_{0,2} \ldots, \lis_{0,r_0} ]$, which spans an $r_0$-dimensional subspace in $\mathbb{R}^R_0$. 
\end{definition}

The eigenvalues in \eqref{eq:exp_H0} provide empirical sensitivity
measures of the likelihood function relative to the prior (which here
is i.i.d.~Gaussian)  along corresponding eigenvectors \cite{DimRedu:Cui_etal_2014,DimRedu:Zahm_etal_2018}.
Eigenvectors corresponding to eigenvalues less than $1$ can be interpreted as parameter directions where the likelihood is dominated by the prior.
Thus, we typically choose a value less than one for the truncation threshold, i.e., $\eigentrunc < 1$. 

\subsection{LIS enrichment} \label{sec:lis_enrich}
Because the computational cost of a matrix vector product with the Gauss--Newton Hessian scales at least linearly with the degrees of freedom $M_\ell$ of the forward model on level $\ell$, constructing the LIS can be computationally costly.
We present a new approach to accelerate the LIS construction by employing a {\it recursive LIS enrichment} using the hierarchy of forward models and parameter discretisations. 
The resulting hierarchy of LISs will be used to reduce the computational complexity of constructing and operating with the resulting DILI proposals.

We reuse the LIS bases computed on the coarser levels by 'lifting'
them and then recursively
enrich them at each new level using a {\it Rayleigh-Ritz procedure},
rather than recomputing the entire basis from scratch on each level.
Ideally, the subspace added on each level will have decreasing
dimension, as the model and parameter approximations were assumed to
converge with $\ell \to \infty$ and thus no longer provide additional
information for the parameter inference. 
\begin{definition}[Lifted LIS basis]
Suppose we have an orthogonal LIS basis $\basis_{\mell, r} \in
\R^{R_{\mell} \times r_{\mell}}$ on level $\mell$. 
We lift $\basis_{\mell, r}$ from the coarse parameter space $\R^{R_{\mell}}$ to the fine parameter space $\R^{R_\ell}$ using the basis matrix $\Theta_{\ell, c}$ defined in Section \ref{sec:notation}.
The lifted LIS basis vectors are collected in the matrix
\begin{equation}
\basis_{\ell, c} = \Theta_{\ell, c} \, \basis_{\mell, r}.
\end{equation}
\end{definition}

\begin{proposition}
The lifted LIS basis matrix $\basis_{\ell, c}$ has orthonormal columns
that span an $r_{\mell}$-dimensional subspace in $\R^{R_\ell}$, i.e.,
$\basis_{\ell, c}^\top \basis_{\ell, c}^{} = \cI_{r_{\mell}}$.
\end{proposition}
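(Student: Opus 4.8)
The plan is to unwind the definitions from Section~\ref{sec:notation}. Recall that $\basis_{\mell, r} \in \R^{R_{\mell} \times r_{\mell}}$ is orthogonal, meaning $\basis_{\mell, r}^\top \basis_{\mell, r}^{} = \cI_{r_{\mell}}$, and that $\Theta_{\ell, c} \equiv (\hat\be_1, \dots, \hat\be_{R_{\mell}}) \in \R^{R_\ell \times R_{\mell}}$ is the matrix whose columns are the first $R_{\mell}$ canonical basis vectors of $\R^{R_\ell}$. First I would observe that $\Theta_{\ell, c}$ has orthonormal columns by construction, i.e. $\Theta_{\ell, c}^\top \Theta_{\ell, c}^{} = \cI_{R_{\mell}}$, since distinct canonical basis vectors are orthonormal.

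The main computation is then a one-line matrix identity:
\[
\basis_{\ell, c}^\top \basis_{\ell, c}^{} = (\Theta_{\ell, c} \basis_{\mell, r})^\top (\Theta_{\ell, c} \basis_{\mell, r}) = \basis_{\mell, r}^\top \Theta_{\ell, c}^\top \Theta_{\ell, c}^{} \basis_{\mell, r}^{} = \basis_{\mell, r}^\top \cI_{R_{\mell}} \basis_{\mell, r}^{} = \basis_{\mell, r}^\top \basis_{\mell, r}^{} = \cI_{r_{\mell}},
\]
which shows the columns of $\basis_{\ell, c}$ are orthonormal. For the claim about the dimension of the span, I would note that $\Theta_{\ell, c}$ has full column rank $R_{\mell}$, hence acts injectively, so it maps the $r_{\mell}$-dimensional column space of $\basis_{\mell, r}$ onto an $r_{\mell}$-dimensional subspace of $\R^{R_\ell}$; alternatively, orthonormality of the columns of $\basis_{\ell, c}$ (just established) already forces $\rank(\basis_{\ell, c}) = r_{\mell}$, so the spanned subspace has dimension exactly $r_{\mell}$.

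There is essentially no obstacle here; the statement is a routine consequence of the fact that lifting via a subset of canonical basis vectors is an isometric embedding $\R^{R_{\mell}} \hookrightarrow \R^{R_\ell}$. The only point requiring a word of care is making explicit that $\Theta_{\ell,c}$ is an isometry onto its image (equivalently has orthonormal columns), which is immediate from its definition as a selection of canonical basis vectors; everything else is bookkeeping with transposes.
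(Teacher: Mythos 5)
Your proof is correct and is exactly the argument the paper intends: it cites the orthonormality of the columns of $\Theta_{\ell,c}$ as the key fact and leaves the transpose computation implicit, which you have simply written out in full. No differences in approach, just added detail.
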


\begin{proof}
The proof directly follows as the matrix $\Theta_{\ell, c}$ has orthonormal columns. 
\end{proof}

Given $K_\ell$ samples $\{\bv_\ell^{(k)}\}_{k=1}^{K_\ell}$ from the
reference distribution $p_\ell^\ast(\cdot)$, let
$\widehat\hessian_\ell$ be the resulting sample-averaged Gauss--Newton Hessian.
To enrich the lifted LIS basis $\basis_{\ell, c}$ we now identify
likelihood-sensitive parameter directions in the null space 
$\textrm{null}(\basis_{\ell, c})$ by recursively optimising the
Rayleigh quotient in the orthogonal complement of
$\range(\basis_{\ell, c})$, i.e., 
\begin{align}\label{eq:Rayleigh_l}
\;\;& \;\lis_{\ell,k+1} = \argmax_{\| \boldsymbol \phi \| = 1} \langle \boldsymbol \phi, \widehat\hessian_\ell \, \boldsymbol \phi \rangle,  \\
\textrm{subject\;to\;\;} & \Pi_{\ell, c} \phi = 0 \textrm{\;\,and\;\,} \langle \boldsymbol \phi, \lis_{\ell,i} \rangle = 0, \textrm{\;for\;} i = 1,  \ldots, k, \nonumber
\end{align}
where $\Pi_{\ell, c} = \basis_{\ell, c}^{} \basis_{\ell, c}^\top$ is an orthogonal projector.
This optimisation problem can be solved as an eigenvalue problem using the {\it Rayleigh-Ritz procedure} \cite{Lin:Saad_2011}.

\begin{theorem}\label{theo:Rayleigh-Ritz}
The optimisation problem \eqref{eq:Rayleigh_l} is equivalent to
finding the leading eigenvectors of the projected eigenproblem
\begin{equation}
\big(\cI_{R_\ell} - \Pi_{\ell, c} \big) \, \widehat\hessian_\ell\, \lis_{\ell,i} = \gamma_{\ell,i} \lis_{\ell,i}, \quad \|\lis_{\ell,i}\| = 1.
\label{eq:def_exp_Hl}
\end{equation}
\end{theorem}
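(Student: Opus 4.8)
The plan is to show that the constrained optimisation \eqref{eq:Rayleigh_l} is a classical Rayleigh--Ritz problem on the orthogonal complement of $\range(\basis_{\ell,c})$. First I would observe that the linear constraint $\Pi_{\ell,c}\phi = 0$ is equivalent to requiring $\phi \in \range(\basis_{\ell,c})^\perp = \range(\cI_{R_\ell} - \Pi_{\ell,c})$, since $\Pi_{\ell,c}$ is the orthogonal projector onto $\range(\basis_{\ell,c})$. Thus the first search direction $\lis_{\ell,1}$ maximises $\langle\phi,\widehat\hessian_\ell\phi\rangle$ over unit vectors in that subspace, and the subsequent ones add the extra orthogonality constraints $\langle\phi,\lis_{\ell,i}\rangle=0$.

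Next I would introduce a matrix $\cW_\ell \in \R^{R_\ell \times (R_\ell - r_{\mell})}$ whose orthonormal columns span $\range(\basis_{\ell,c})^\perp$, so that any feasible $\phi$ can be written $\phi = \cW_\ell z$ with $\|z\|=\|\phi\|$. Substituting, the problem \eqref{eq:Rayleigh_l} becomes the unconstrained sequence $z_{\ell,k+1} = \argmax_{\|z\|=1}\langle z, \cW_\ell^\top\widehat\hessian_\ell\cW_\ell\, z\rangle$ subject to $\langle z,z_{\ell,i}\rangle=0$ for $i\le k$. By the Courant--Fischer (min-max) characterisation of eigenvalues of the symmetric positive semidefinite matrix $\cW_\ell^\top\widehat\hessian_\ell\cW_\ell$, the solutions $z_{\ell,i}$ are exactly its leading eigenvectors with eigenvalues $\gamma_{\ell,i}$, and $\lis_{\ell,i} = \cW_\ell z_{\ell,i}$. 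The final step is to transfer this back to the ambient space: multiplying $\cW_\ell^\top\widehat\hessian_\ell\cW_\ell\, z_{\ell,i} = \gamma_{\ell,i} z_{\ell,i}$ on the left by $\cW_\ell$ and using $\cW_\ell\cW_\ell^\top = \cI_{R_\ell}-\Pi_{\ell,c}$ together with $\cW_\ell^\top\cW_\ell = \cI_{R_\ell-r_{\mell}}$ and $\phi=\cW_\ell z_{\ell,i} \Rightarrow (\cI-\Pi_{\ell,c})\phi = \phi$, one obtains $(\cI_{R_\ell}-\Pi_{\ell,c})\widehat\hessian_\ell\lis_{\ell,i} = \gamma_{\ell,i}\lis_{\ell,i}$, i.e.\ \eqref{eq:def_exp_Hl}, with $\|\lis_{\ell,i}\|=1$ inherited from $\|z_{\ell,i}\|=1$ and orthonormality of $\cW_\ell$'s columns.

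The one subtlety worth flagging — the main ``obstacle'' — is that the operator $(\cI_{R_\ell}-\Pi_{\ell,c})\widehat\hessian_\ell$ appearing in \eqref{eq:def_exp_Hl} is \emph{not} symmetric, so one must be careful that its ``leading eigenvectors'' are well-defined and coincide with the Ritz vectors. I would resolve this by noting that its nonzero spectrum, restricted to the invariant subspace $\range(\basis_{\ell,c})^\perp$, agrees with that of the symmetric matrix $(\cI-\Pi_{\ell,c})\widehat\hessian_\ell(\cI-\Pi_{\ell,c}) = \cW_\ell(\cW_\ell^\top\widehat\hessian_\ell\cW_\ell)\cW_\ell^\top$; on the complementary invariant subspace $\range(\basis_{\ell,c})$ the operator acts as zero. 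Hence the eigenvectors with eigenvalues $\gamma_{\ell,i}>0$ automatically lie in $\range(\basis_{\ell,c})^\perp$, satisfy the constraint $\Pi_{\ell,c}\lis_{\ell,i}=0$ for free, and are exactly the maximisers of the Rayleigh quotient. I would keep the argument short by citing the standard Rayleigh--Ritz analysis in \cite{Lin:Saad_2011} for the min-max step and only spelling out the reduction via $\cW_\ell$ and the back-substitution identity $\cW_\ell\cW_\ell^\top = \cI_{R_\ell}-\Pi_{\ell,c}$.
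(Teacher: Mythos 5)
Your proof is correct and follows essentially the same route as the paper's: both reduce \eqref{eq:Rayleigh_l} to the symmetric doubly-projected problem (you via the explicit orthonormal basis $\cW_\ell$ of $\range(\basis_{\ell,c})^\perp$ with $\cW_\ell\cW_\ell^\top = \cI_{R_\ell}-\Pi_{\ell,c}$, the paper by substituting $\phi = (\cI_{R_\ell}-\Pi_{\ell,c})\phi$ directly), and both then drop one projector using the fact that the relevant eigenvectors lie in $\range(\cI_{R_\ell}-\Pi_{\ell,c})$. Your explicit remark that $(\cI_{R_\ell}-\Pi_{\ell,c})\widehat\hessian_\ell$ is not symmetric, and that its nonzero spectrum nevertheless agrees with that of the symmetric restriction, is a useful clarification that the paper's sketch leaves implicit.
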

\begin{proof}
This result follows from the properties of orthogonal projectors and
of the stationary points of the Rayleigh quotient. Here, we sketch the proof as follows.
The constraint $\Pi_{\ell, c} \boldsymbol \phi = 0$ implies
$\boldsymbol \phi = (\cI_{R_\ell} - \Pi_{\ell, c} ) \boldsymbol \phi$,
since $(\cI_{R_\ell} - \Pi_{\ell, c} )$ is also an orthogonal projector.
Hence, the optimisation problem becomes
\[
\lis_{\ell,k+1} = \argmax_{\| \boldsymbol \phi \| = 1} \langle
\boldsymbol \phi, (\cI_{R_\ell} - \Pi_{\ell, c} )
\widehat\hessian_\ell (\cI_{R_\ell} - \Pi_{\ell, c} ) \, \boldsymbol
\phi \rangle,  \ \ 
\text{subject to} \ \ \langle \boldsymbol \phi, \lis_{\ell,i} \rangle
= 0, \  i = 1,  \ldots, k.
\]
The solutions (for $k = 1, 2, \ldots$) to these optimisation problems are given by the leading eigenvectors of the eigenproblem
\[
\big(\cI_{R_\ell} - \Pi_{\ell, c} \big) \, \widehat\hessian_\ell\, \big(\cI_{R_\ell} - \Pi_{\ell, c} \big)\, \lis_{\ell,i} = \gamma_{\ell,i} \lis_{\ell,i}.
\]
However, since $\lis_{\ell,i} \in \textrm{range} \big(\cI_{R_\ell} -
\Pi_{\ell, c} \big) $ this is equivalent to 
\[
\big(\cI_{R_\ell} - \Pi_{\ell, c} \big) \, \widehat\hessian_\ell\, \lis_{\ell,i} = \gamma_{\ell,i} \lis_{\ell,i}.
\]
\end{proof}

\begin{definition}[LIS enrichment on level $\ell$]\label{def:deflation}
The leading $s_\ell$ (normalised) eigenvectors of the eigenproblem \eqref{eq:def_exp_Hl} with eigenvalues $\gamma_{\ell,i} > \eigentrunc$ are denoted by 
\begin{equation}\label{eq:auxiliary_lis}
\basis_{\ell, f} = [\lis_{\ell,1}, \dots, \lis_{\ell,s_\ell}].
\end{equation}
They are added to the lifted LIS basis from
level $\mell$ to form the enriched LIS basis
\begin{equation}
\basis_{\ell, r} = [\basis_{\ell, c}, \basis_{\ell, f}]
\end{equation}
 on level~$\ell$, where the basis vectors in \eqref{eq:auxiliary_lis} denote the auxiliary ``fine scale'' directions added on level $\ell$.
By construction, all the LIS basis vectors at level $\ell$ are mutually orthogonal. That is, $\basis_{\ell, r}^\top \basis_{\ell, r}^{} = \cI_{r_\ell}$.
We also have $r_\ell = r_{\mell} + s_\ell$.
\end{definition}

By construction, the LIS basis $\basis_{\ell, r}$ is block upper triangular and can be recursively defined as
\begin{equation}\label{eq:upper_tri_lis}
\basis_{\ell, r} = [\basis_{\ell, c}, \basis_{\ell, f}] = \begin{bmatrix}
\basis_{\mell,r} &\cZ_{\ell,c}\\
0  &\cZ_{\ell,f}
\end{bmatrix},
\end{equation}
where $\cZ_{\ell,c} = \Theta_{\ell, c}^\top \basis_{\ell, f} \in \R^{ R_{\mell} \times  s_{\ell} }$, $\cZ_{\ell,f} = \Theta_{\ell, f}^\top \basis_{\ell, f} \in \R^{ (R_{\ell} - R_{\mell}) \times  s_{\ell} }$, and $\basis_{\mell, r} \in \R^{ R_{\mell} \times r_{\mell}}$.
We have $s_\ell = r_{\ell} - r_{\mell}$ and define  $s_0 = r_0$ for consistency. 
The hierarchical LIS reduces the computational cost of operating with the LIS basis and the associated storage cost. This is critical for building efficient multilevel DILI proposals that will be discussed later.
In addition, the recursive LIS enrichment is computationally more efficient, since the amount of costly PDE solves on the finer levels will be significantly reduced. In Appendix \ref{sec:LIS_cost}, we develop heuristics to demonstrate the reduction factors of the hierarchical construction of LIS basis in terms of the storage and the number of matrix vector products. 



\section{Multilevel DILI MCMC}\label{sec_MLDILI}

To compute the multilevel MCMC estimator, we need to construct Markov chains $\smash{\{\rvml^{(\ell, j)} \}}$ and $\{\rvl^{(\ell, j)} \}$ for adjacent levels $\mell$ and $\ell$ with invariant densities $\smash{\pi_{\mell}(\bv_{\mell} | \data)}$ and $\pi_{\ell}(\bv_{\ell}| \data)$, respectively. As discussed in Remark \ref{remark:var_cov}, it is crucial that the QoIs produced by the two Markov chains $\smash{\{\rvml^{(\ell, j)} \}}$ and $\{\rvl^{(\ell, j)} \}$ are positively correlated, i.e., $\textrm{Cov}_{\Delta_{\ell,\mell}}(Q_\ell, Q_{\mell}) > 0$, so that the within-level variance
$\Var_{\Delta_{\ell,\mell}}(D_\ell)$ is reduced.
\blue{Here, we design a computationally efficient way in Section \ref{sec:coupling} to couple DILI proposals within the original MLMCMC \cite{MCMC:KST_2013}, we introduce the computational framework in Section \ref{sec:dili}, and then provide an alternative sampling strategy in Section \ref{sec:pooling} that is more suitable for a parallel implementation.}

\subsection{Coupled DILI proposal}
\label{sec:coupling}

Let $\rvml^{(\ell,j)} = \bv_{\mell}^\ast$ and $\rvl^{(\ell,j)} = \bv_\ell^\ast$ be the $j$-th states of the Markov chains at levels $\mell$ and $\ell$, respectively. The state at level $\ell$ has the form ${\bv_\ell^\ast = (\bv_{\ell, c}^\ast, \bv_{\ell, f}^\ast)}$, corresponding to the coarse part of the parameters (shared with level $\mell$) and the refined part, respectively. 
The two Markov chains are called {\it coupled} at the $j$-th state if $\bv_{\ell, c}^\ast = \bv_{\mell}^\ast$.
Thus, assuming the two chains to be coupled at the $j$th state, we first present the general form of the multilevel MCMC for generating the next pair of coupled states, and then design the hierarchical DILI proposal within this general framework.

Following \cite{MCMC:KST_2013}, we assume that we can generate independent posterior samples $\mathcal{V}_{\mell} = \{\bv_{\mell}^{(i)} \}_{i =1}^{N_{\ell}}$ on level $\mell$. In practice, this is achieved (approximatively) by sub-sampling a Markov chain that targets the level $\mell$ posterior with a sub-sampling rate that depends on the sample autocorrelation \cite[Sect.~3]{MCMC:KST_2013}.
In other words, coupled posterior samples from $\pi(\bv_{\mell} |
\by)$ and $\pi(\bv_{\ell} | \by)$ are generated by using the posterior
$\pi(\bv_{\mell} | \by)$ on level $\mell$ as the proposal distribution
for the Markov chain on level $\ell$, thus reducing the within-level
variance $\Var_{\Delta_{\ell,\mell}}(D_\ell)$.

The proposed candidate $\bv'_{\mell}  \sim \pi_{\mell}(\cdot | \by)$
is assumed to be independent of the current state $\bv_{\mell}^\ast$.
To sample from the refined posterior  $\pi_{\ell}(\bv_{\ell}| \data)$,
we then consider the factorised proposal
\begin{align}\label{eq:coupling_proposal}
q \big( \bv_{\ell}' \,\big|\, \bv_{\ell}^\ast \big) = q \big( \bv_{\ell,c}'\,, \,\bv_{\ell,f}' \,\big|\, \bv_{\ell}^\ast \big) = \pi_{\mell}^{} \big(\bv_{\ell, c}' | \by \big) \, q \big( \bv_{\ell,f}' \,|\, \bv_{\ell}^\ast, \bv_{\ell,c}' \big) ,
\end{align}
where the coarse part $\bv_{\ell, c}'$ of the proposal is set to be
the (independent) proposal $\bv_{\mell}'$ from level $\mell$.
The proposal candidate $\bv'_\ell$ conditioned on $\bv_\ell^\ast$ can then be expressed as
\begin{align}\label{eq:proposal_1}
& \bv'_{\ell,c}  = \bv'_{\mell} & ( \textrm{copy from level \;} \mell \textrm{\;proposal}) ,\\
& \bv_{\ell,f}' \sim q \big( \, \cdot \,|\, \bv_{\ell}^\ast, \bv_{\ell,c}' \big) & (\textrm{conditional proposal}).\label{eq:proposal_2}
\end{align}
Based on the factorised proposal \eqref{eq:coupling_proposal}, the acceptance probability for the chain targeting the level $\ell$ posterior $\pi^{}_\ell\big(\bv^{}_{\ell} \,|\,
\data\big)$ is of the form
\begin{align}\label{eq:accept_multi}
\alpha^{\rm ML}_\ell(\bv_{\ell}^\ast, \bv_{\ell}'  ) = \min\left\{1, \frac{\pi_\ell\big(\bv_{\ell}' | \data\big)\,\pi_{\mell}\big(\bv_{\mell}^\ast| \data\big)}{\pi_\ell\big(\bv_{\ell}^\ast | \data\big)\,\pi_{\mell}\big(\bv_{\mell}^\prime| \data\big)}
\frac{q \big(  \bv_{\ell, f}^\ast | \bv_{\ell}^\prime, \bv_{\mell}^\ast \big)}{q \big( \bv_{\ell,f}^\prime | \bv_{\ell}^\ast,\bv_{\mell}^\prime \big)}  \right\}.
\end{align}

\begin{figure}[ht]
\centering
\begin{tikzpicture}[scale=0.91]
\tikzset{point1/.style = {draw, circle,  fill = black, inner sep = 2pt}}
\tikzset{point2/.style = {draw, circle, very  thick, fill = white, inner sep = 2pt}}
\node (C0) at (-2,0) []  {} ;
\node (C1) at (0,0) [point1, label={above:$\bV_{\!\!\mell}^{(\ell, j)} = \bv_{\mell}^\ast$}]  {} ;
\node (C2) at (5,0) [point1, label={above:$\bV_{\!\!\mell}^{(\ell, j+1)} = \bv'_{\mell}$}]  {} ;
\node (C3) at (10,0) [point1, label={above:$\bV_{\!\!\mell}^{(\ell, j+2)} = \bv_{\mell}^\circ$}]  {} ;
\node (C4) at (12,0)  []  {} ;
\node (PC2) at (3,-2) [point2, label={left:$\bv'_{\mell}$\;\,}]  {} ;
\node (PC3) at (7,-2) [point2, label={left:$\bv_{\mell}^\circ$\;\,}]  {} ;
\node (PF2) at (3,-4) [point2, label={left:$(\bv'_{\ell, c}, \bv'_{\ell, f})$}]  {} ;
\node (PF3) at (7,-4) [point2, label={left:$(\bv_{\ell, c}^\circ, \bv_{\ell, f}^\circ)$}]  {} ;
\node (F0) at (-2,-6) []  {} ;
\node (F1) at (0,-6) [point1, label={below:$\bV_{\!\!\ell}^{(\ell, j)} = (\bv_{\ell, c}^\ast, \bv_{\ell, f}^\ast)$}]  {} ;
\node (F2) at (5,-6) [point1, label={below:$\bV_{\!\!\ell}^{(\ell, j+1)} = (\bv_{\ell, c}^\ast, \bv_{\ell, f}^\ast)$}]  {} ;
\node (F3) at (10,-6) [point1, label={below:$\bV_{\!\!\ell}^{(\ell, j+2)} = (\bv_{\ell, c}^\circ, \bv_{\ell, f}^\circ)$}]  {} ;
\node (F4) at (12,-6)  []  {} ;
\draw[->, very thick, dashed] (C0) -- (C1);
\draw[->, very thick] (C1) -- (C2);
\draw[->, very thick] (C2) -- (C3);
\draw[->, very thick, dashed] (C3) -- (C4);
\draw[->, very thick, dashed] (F0) -- (F1);
\draw[->, very thick] (F1) -- (F2);
\draw[->, very thick] (F2) -- (F3);
\draw[->, very thick, dashed] (F3) -- (F4);
\draw[<->,double, blue] (C1) -- node[left] {$\bv_{\ell, c}^\ast = \bv_{\mell}^\ast$} (F1) ;
\draw[<->,double, blue] (C3) -- node[right] {$\bv_{\ell, c}^\circ = \bv_{\mell}^\circ$} (F3) ;
\draw[<->,double, blue] (PC2) -- node[left] {$\bv'_{\ell, c} = \bv'_{\mell}$} (PF2) ;
\draw[<->,double, blue] (PC3) -- node[left] {$\bv_{\ell, c}^\circ = \bv_{\mell}^\circ$} (PF3) ;
\draw[->, dashed, very thick, teal] (C1) -- node[below, sloped] {propose} (PC2);
\draw[->, dashed, very thick, teal] (PC2) -- node[above, sloped] {accept} (C2);
\draw[->, dashed, very thick, teal] (C2) -- node[below, sloped] {propose} (PC3);
\draw[->, dashed, very thick, teal] (PC3) -- node[above, sloped] {accept} (C3);
\draw[->, dashed, very thick, teal] (F1) -- node[above, sloped] {propose} (PF2);
\draw[->, dashed, very thick, red!50] (PF2) -- node[below, sloped] {reject} (F2);
\draw[->, dashed, very thick, teal] (F2) -- node[above, sloped] {propose} (PF3);
\draw[->, dashed, very thick, teal] (PF3) -- node[below, sloped] {accept} (F3);
\end{tikzpicture}

\caption{ \label{fig:couping} \blue{This diagram illustrates the coupling strategy with double arrows representing the coupling of two MCMC states as well as the coupling of two proposal candidates across levels. The dashed arrows represent the proposal and the accept/reject steps. At each iteration, the proposal candidates are coupled by construction, while the samples $\bv_{\ell}^{(\ell, j)}$ and $\bv_{\mell}^{(\ell, j)}$ on  two adjacent levels are only coupled whenever the proposal on level $\ell$ is accepted, i.e., for all cases $j, j+1, j+2$ here.}
}
\end{figure}

Figure \ref{fig:couping} shows a schematic of the coupling
strategy. The double arrows represent the coupling of the two MCMC states,
as well as the coupling of the two proposal candidates across
levels. The dashed arrows represent the proposal and
acceptance/rejection steps.
The top half represents the Markov chain on level $\mell$.
The bottom half represents the Markov chain on level $\ell$. Since all the proposal candidates are coupled, all states that follow the acceptance of a proposal candidate on level $\ell$ are also coupled with the
corresponding state on level $\mell$.

\subsubsection{DILI proposal}
Then, we design the DILI proposal using the hierarchical LIS introduced in Section \ref{sec_MLLIS}.
Recall that the discretised DILI proposal \eqref{eq:operator_weighted_v} is
\begin{equation}
\label{eq:operator_weighted_l}
\bv'_\ell =  \cA_\ell^{} \bv_\ell^\ast + \cB_\ell^{} \bxi_\ell, \quad \textrm{where\quad} \bxi_\ell \sim \normal\big(0, \cI_{R_\ell}\big),
\end{equation}
as it was introduced in \cite{MCMC:CLM_2016}.
Suppose we have a LIS basis $\basis_{\ell, r} \in \R^{R_\ell \times r_\ell}$. 
By treating the likelihood-informed parameter directions and the prior-dominated directions separately, we can construct the matrices $\cA_\ell$ and $\cB_\ell$ as
\begin{align}
\label{eq:operators_al}
\cA_\ell & = \basis_{\ell,r} \, \cA_{\ell,r} \, {\basis_{\ell,r}^\top} + a_\perp ( \cI_{R_\ell} - \Pi_{\ell})   \in \R^{R_
\ell\times R_\ell},\\
\cB_\ell^2 & = \basis_{\ell,r} \, \cB_{\ell,r}^2 \, {\basis_{\ell,r}^\top} + b_\perp^{2} ( \cI_{R_\ell} - \Pi_{\ell})\in \R^{R_
	\ell\times R_\ell},
\label{eq:operators_bl}
\end{align}
where $A_{\ell,r},B_{\ell,r}\in\R^{r_
	\ell\times r_\ell}$, $a_\perp$ and $b_\perp\in\R$ and
      $\Pi_{\ell} = \basis_{\ell, r}^{} \basis_{\ell, r}^\top$ are
      rank-$r_\ell$ orthogonal projectors. 

\begin{corollary}\label{coro:accept_DILI}
In the proposal \eqref{eq:operator_weighted_l}, suppose that 
$\cA_{\ell, r}, \cB_{\ell, r}\in \R^{r_\ell \times r_\ell}$ are
non-singular matrices satisfying $\cA_{\ell,r}^2 + \cB_{\ell,r}^2 =
\cI_{\ell,r}$, and $a_\perp$ and $b_\perp$ are scalars 
satisfying  $a_\perp^2 + b_\perp^2 = 1$.
Then, the corresponding proposal distribution
$q(\bv_{\ell}'|\bv_{\ell}^*)$ satisfies the  conditions of Theorem
\ref{theo:1} and has the prior as its invariant measure, i.e., this
proposal  has acceptance probability one if we use it to sample the
prior.  The acceptance probability as samples from
$\pi_\ell\big(\bv_{\ell} | \data\big)$ is
\begin{equation}\label{eq:accept_dili}
\alpha \big(\bv_\ell^\ast, \bv^\prime_\ell\big) = \min \Big\{ 1, \exp\Big[ \potential_\ell^{} \big(\bv_\ell^\ast; \data \big) - \potential_\ell^{} \big(\bv^\prime_\ell; \data \big)  \Big] \Big\}.
\end{equation}
\end{corollary}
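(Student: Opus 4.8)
The plan is to verify that the proposed operators $\cA_\ell$ and $\cB_\ell$ satisfy the hypotheses of Theorem \ref{theo:1} in the finite-dimensional setting of level $\ell$, and then to simplify the general acceptance formula of Theorem \ref{theo:1} to the stated expression. First I would observe that $\cA_\ell$ and $\cB_\ell$ are both self-adjoint by construction, since $\basis_{\ell,r}\,\cA_{\ell,r}\,\basis_{\ell,r}^\top$ and $a_\perp(\cI_{R_\ell}-\Pi_\ell)$ are each symmetric (using $\cA_{\ell,r}=\cA_{\ell,r}^\top$, which we may assume as in the diagonalised construction of \cite{MCMC:CLM_2016}, and similarly for $\cB_{\ell,r}^2$). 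Because $\range(\basis_{\ell,r})$ and $\range(\cI_{R_\ell}-\Pi_\ell)$ are orthogonal complementary invariant subspaces for both operators, $\cA_\ell$ and $\cB_\ell$ commute and admit a common eigenbasis: the columns of $\basis_{\ell,r}$ (rotated into the eigenbasis of $\cA_{\ell,r},\cB_{\ell,r}$) together with any orthonormal basis of $\range(\cI_{R_\ell}-\Pi_\ell)$.

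Next I would check the nonvanishing-eigenvalue and summability conditions. Nonsingularity of $\cA_{\ell,r}$ and $\cB_{\ell,r}$ gives nonzero eigenvalues on the LIS block, and $a_\perp,b_\perp\neq 0$ (which follows from $a_\perp^2+b_\perp^2=1$ together with the implicit requirement, inherited from the pCN complement, that neither vanish) gives nonzero eigenvalues on the complement; so all $a_i,b_i\in\R\setminus\{0\}$. The summability $\sum_i(a_i^2+b_i^2-1)^2<\infty$ is immediate here: on the LIS block $\cA_{\ell,r}^2+\cB_{\ell,r}^2=\cI_{\ell,r}$ forces $a_i^2+b_i^2-1=0$ for those $r_\ell$ directions, and on the complement $a_\perp^2+b_\perp^2=1$ forces it to be $0$ there as well, so the sum is identically zero. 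Since we are in $\R^{R_\ell}$ the posterior $\pi_\ell$ is trivially equivalent to the Gaussian prior $p_\ell$ (both have densities positive everywhere, using Assumption \ref{assum1}), so all hypotheses of Theorem \ref{theo:1} hold; this gives well-definedness and the acceptance formula in terms of the modified potential.

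Then I would compute the modified potential term. With $\cA=\cA_\ell$, $\cB=\cB_\ell$ satisfying $\cA_\ell^2+\cB_\ell^2=\cI_{R_\ell}$ globally — which holds because the two orthogonal blocks each satisfy it — the quadratic correction $\tfrac12\langle v,\cB^{-2}(\cA^2+\cB^2-\cI)v\rangle$ vanishes for every $v$. Hence $\exp\bigl(\dilipotential{v}\bigr)$ collapses to $\exp(-\potential_\ell(v;\data))$, and the ratio in Theorem \ref{theo:1} becomes $\exp\bigl[\potential_\ell(\bv_\ell^\ast;\data)-\potential_\ell(\bv_\ell';\data)\bigr]$, which is \eqref{eq:accept_dili}. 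The claim that the proposal has the prior as invariant measure (acceptance probability one when targeting $p_\ell$) is the special case $\potential_\ell\equiv\text{const}$, or equivalently follows by noting $v'=\cA_\ell v^\ast+\cB_\ell\bxi_\ell$ with $\cA_\ell^2+\cB_\ell^2=\cI$ is exactly a reversible proposal for $\normal(0,\cI_{R_\ell})$.

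I do not anticipate a genuine obstacle here: the main points are purely algebraic, and the only thing requiring a little care is articulating why $\cA_\ell$ and $\cB_\ell$ commute and why the identity $\cA_\ell^2+\cB_\ell^2=\cI_{R_\ell}$ propagates from the two blocks to the whole space — both of which rest on the orthogonal direct-sum decomposition $\R^{R_\ell}=\range(\basis_{\ell,r})\oplus\range(\cI_{R_\ell}-\Pi_\ell)$ being invariant under both operators. Once that structural observation is in place, everything else is a direct substitution into Theorem \ref{theo:1}.
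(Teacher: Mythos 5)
Your argument is correct and follows essentially the same route as the paper's proof: establish that $\cA_\ell$ and $\cB_\ell$ are simultaneously diagonalisable via the orthogonal splitting into the LIS block and its complement, verify the hypotheses of Theorem~\ref{theo:1} (where $\cA_\ell^2+\cB_\ell^2=\cI_{R_\ell}$ makes the summability condition hold trivially with every term zero), and observe that the same identity kills the quadratic correction in the acceptance ratio. You merely spell out the steps the paper compresses into ``it is easy to check,'' including the explicit vanishing of $\tfrac12\langle v,\cB^{-2}(\cA^2+\cB^2-\cI)v\rangle$ and the prior-reversibility of $v'=\cA_\ell v^\ast+\cB_\ell\bxi_\ell$.
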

\begin{proof}
Given $\cA_{\ell,r}^2 + \cB_{\ell,r}^2 = \cI_{\ell,r}$, the symmetric matrices $\cA_{\ell, r}$ and $\cB_{\ell, r}$ can be simultaneously diagonalised under some orthogonal transformation. 
Thus, the operators $\cA_\ell$ and $\cB_\ell$ can be simultaneously diagonalised, where the eigenspectrum of $\cA_\ell$ consists of the eigenvalues of $\cA_{\ell, r}$ and $a_\perp$, and the same applies to $\cB_\ell$.
This way, it is easy to check that the proposal distribution
$q(\bv_{\ell}'|\bv_{\ell}^*)$ has the prior as invariant measure and
that the conditions of Theorem~\ref{theo:1} are satisfied. The form of
the acceptance probability to sample from $\pi_\ell\big(\bv_{\ell} |
\data\big)$ directly follows from the acceptance
probability defined in Theorem \ref{theo:1}.
\end{proof}

We use the empirical posterior covariance, commonly used in adaptive MCMC \cite{MCMC:RoRo_1998,MCMC:HST_2001,IP:Haario_etal_2004} to construct matrices $\cA_{\ell, r}$ and $\cB_{\ell, r}$ for our DILI proposal \eqref{eq:operator_weighted_l}.
On each level, the empirical covariance matrix $\Sigma_{\ell, r}\in \R^{r_\ell \times
  r_\ell}$ is estimated from past posterior samples projected onto the
LIS. Given a jump size $\Delta t$, we can then define the matrices $\cA_{\ell, r}^{}$ and $\cB_{\ell, r}^{2}$ by
\begin{align*}
\textstyle  \cA_{\ell, r} & = \textstyle (2 \cI_{r_\ell} \!+\! \Delta t \Sigma_{\ell, r})^{-1}(2 \cI_{r_\ell} \!-\! \Delta t \Sigma_{\ell, r}) = \cI_{r_\ell}  - 2 \big(  \cI_{r_\ell} \!+\! \frac{\Delta t}{2}\Sigma_{\ell, r} \!\big)^{\!-1} \big(\frac{\Delta t}{2}\Sigma_{\ell, r} \big), \\
\cB_{\ell, r}^2 & = \textstyle \cI_{r_\ell}^{} - \cA_{\ell, r}^2  = 4 \, \big( 2\,\cI_{r_\ell} + \big(\frac{\Delta t}{2} \Sigma_{\ell, r} \big)^{-1} + \frac{\Delta t}{2} \Sigma_{\ell, r} \big)^{-1},
\end{align*}
respectively. The operators $\cA_{\ell, r}$ and $\cB_{\ell, r}$
satisfy $\cA_{\ell, r}^2 + \cB_{\ell, r}^2 = \cI_{\ell, r}^{}$ by
construction.

By estimating the empirical covariance within the subspace, common conditions such as the diminishing adaptation \cite{MCMC:AnMou_2006,MCMC:RoRo_2007} for the convergence of adaptive MCMC can be easily satisfied. In addition, we adopt a finite adaptation strategy in our numerical implementation, in which only the samples generated post adaptation are used for estimating QoIs.

\subsubsection{Conditional DILI proposal}
On level $0$, the vanilla DILI proposal (cf.~\cite{MCMC:CLM_2016}) can be used to sample the Markov chain with invariant distribution $\pi_{0}(\bv_{0}| \data)$. 
On level $\ell$, to simulate coupled Markov chains using the proposal mechanism defined in \eqref{eq:coupling_proposal}--\eqref{eq:proposal_2}, a key step is to use DILI to generate the fine components $\bv_{\ell, f}'$ of
the proposal candidate and thus to fix the conditional probability 
$q(\bv_{\ell,f}' | \bv_{\ell}^\ast, \bv_{\ell,c}' )$. 
Defining the precision matrix
\begin{equation}\label{eq:Ql}
\cP_\ell^{} = \cB_\ell^{-2} = \basis_{\ell, r}^{} \, \cB_{\ell,r}^{-2} \,\basis_{\ell, r}^\top + b_\perp^{-2} ( \cI_{R_\ell}^{} - \Pi_{\ell}^{}),
\end{equation}
the DILI proposal \eqref{eq:operator_weighted_l} can be split as follows:
\begin{equation}\label{eq:DILIproposal_split}
\begin{bmatrix} \bv'_{\ell, c} \\ \bv'_{\ell, f} \end{bmatrix} = \cA_\ell^{} \bv_\ell^\ast  + \begin{bmatrix} \br_{\ell, c}  \\ \br_{\ell, f} \end{bmatrix}, 
\quad \begin{bmatrix} \br_{\ell, c}  \\ \br_{\ell, f} \end{bmatrix} 
\sim \normal\Big(0, \begin{bmatrix}
\cP_{\ell,cc} &\cP_{\ell,cf}\\
\cP_{\ell,fc} &\cP_{\ell,\ff}
\end{bmatrix}^{-1}\Big),
\end{equation}
where the partitions of the vectors and of the matrix $\cP_\ell$
correspond to the parameter coordinates shared with level $\mell$ and
the refined parameter coordinates on level $\ell$.

\blue{To draw candidate samples from the factorised
proposal distribution $\pi_{\mell}^{} \big(\bv_{\ell, c}' | \by \big)\,
q ( \bv_{\ell,f}' | \bv_{\ell}^\ast, \bv_{\ell,c}' )$ defined in \eqref{eq:coupling_proposal} we use the procedure outlined in Algorithm \ref{alg:conditional_dili}, which employs the
DILI proposal in the form of \eqref{eq:DILIproposal_split} for
the conditional distribution $q( \bv_{\ell,f}' | \bv_{\ell}^\ast, \bv_{\ell,c}' )$.}

\begin{algorithm}[th]
  \caption{Conditional DILI proposal.}
  \label{def:conditional_proposal}
\blue{
  \textbf{Input:} A proposal $\bv'_{\ell,c}$ drawn from $\pi_{\mell}^{} (\bv_{\ell, c}' | \by )$ using a sub-sampled Markov chain.\\[0.5ex]
  \textbf{Output:} A joint, candidate proposal $\bv_\ell^\prime = (\bv_{\ell,c}^\prime, \bv_{\ell,f}^\prime)$ on the fine level based on \eqref{eq:DILIproposal_split}. 
\begin{algorithmic}[1]
\Procedure{Conditional DILI proposal\label{alg:conditional_dili}}{}
	\State With $\bv'_{\ell,c}$ and $\bv_{\ell}^\ast$ known, compute the `residual' $\br_{\ell, c}= \bv'_{\ell, c} - \Theta_{\ell, c}^\top\,\cA_\ell^{} \, \bv_\ell^\ast$ using \eqref{eq:DILIproposal_split}.\!\! 
\State Draw a random variable $\br_{\ell, f}$
          conditioned on $\br_{\ell, c}$ such that jointly
          $(\br_{\ell, c}, \br_{\ell, f}) \sim \normal(0, \cP_\ell^{-1})$.\linebreak
          \hspace*{0.4cm} Due to
          \eqref{eq:DILIproposal_split}, the fine-level components of the
          proposed  candidate $\bv'_{\ell,f}$ then satisfy 
\begin{equation}\label{eq:conditional_gaussian}
\bv'_{\ell,f} = \Theta_{\ell, f}^\top\, \cA_\ell^{}\, \bv_\ell^\ast +
\br_{\ell, f}^{} , \quad \br_{\ell, f}^{} 
\sim \cN\big(\!-\!\cP_{\ell,\ff}^{-1}\cP_{\ell, fc}^{}\br_{\ell,c}^{},
\cP_{\ell,\ff}^{-1} \big) .  
\end{equation}
\EndProcedure
\end{algorithmic}
%
}
\end{algorithm}

\begin{corollary}\label{coro:conditional}
Using the above procedure to draw candidates from the factorised
proposal distribution $\pi_{\mell}^{} \big(\bv_{\ell, c}' | \by \big) q \big( \bv_{\ell,f}' | \bv_{\ell}^\ast, \bv_{\ell,c}' \big)$,
the acceptance probability to sample from the posterior distribution
$\pi_\ell(\bv_{\ell} | \data)$ is
\begin{equation*}
\alpha^{\rm ML}_\ell \big(\bv_\ell^\ast, \bv^\prime_\ell\big) = \min \left\{ 1, \exp\!\left[ \Big(\potential_\ell^{} \big(\bv_\ell^\ast; \data\big) \!-\! \potential_{\mell}^{} \big(\bv_{\mell}^\ast; \data\big) \Big) \!-\! \Big(\potential_\ell^{} \big(\bv_\ell'; \data\big) \!-\! \potential_{\mell}^{} \big(\bv_{\mell}'; \data\big) \Big)  \right] \right\}.
\end{equation*}
\end{corollary}
\begin{proof}
See Appendix \ref{sec:proof_conditional}.
\end{proof}

\subsubsection{Generating conditional samples}\label{sec:DILI_contitional}
The computational cost of the coupling procedure is dictated by the multiplication with $\cA_\ell$ in Step 2 and the generation of conditional proposal samples in Step 3. 
The multiplication with $\cA_\ell$ has a computational complexity of $\textstyle \cO( \sum_{j = 0}^{\ell} R_{j} s_{j})$ using the low-rank representation \eqref{eq:operators_al} and the upper-triangular hierarchical LIS basis in \eqref{eq:upper_tri_lis}, which has the form 
\[
\basis_{\ell, r} = [\basis_{\ell, c}, \basis_{\ell, f}] = \begin{bmatrix}
\basis_{\mell,r} &\cZ_{\ell,c}\\
0  &\cZ_{\ell,f}
\end{bmatrix}.
\]

We can also exploit the hierarchical LIS to reduce the computational
cost of generating conditional proposal samples.
As shown in Equation \eqref{eq:Ql}, given the LIS basis $\basis_{\ell,r}$, the precision matrix $\cP_\ell$ is dictated by the matrix $\cB_{\ell,r}^{-2}$, which has the block form
\begin{equation}
\cB_{\ell,r}^{-2} = \begin{bmatrix}
\Xi_{\ell,cc} &\Xi_{\ell,cf}\\
\Xi_{\ell,fc} &\Xi_{\ell,\ff}
\end{bmatrix},
\end{equation}
corresponding to the splitting of the enriched LIS basis into $\basis_{\ell, c}$ and $\basis_{\ell, f}$. 
Generating conditional proposal samples only involves the blocks
$\cP_{\ell,\ff}$ and $\cP_{\ell, fc}$ in the  matrix $\cP_\ell$, i.e.,
\begin{align}
\cP_{\ell,\ff} &= \cZ_{\ell,f} \, \big(\Xi_{\ell,\ff} - b_\perp^{-2}\,\cI \,\big) \, \cZ_{\ell,f}^\top  + b_\perp^{-2}\,\cI_{\ell,f}\,,\\
\cP_{\ell,fc} &=\cZ_{\ell,f} \Xi_{\ell,fc}  \basis_{\mell,r}^\top + \cZ_{\ell,f} \Xi_{\ell,\ff} \cZ_{\ell,c}^\top - b_\perp^{-2} \cZ_{\ell,f}\cZ_{\ell,c}^\top\,,
\end{align}
which in turn only require the blocks $\Xi_{\ell,fc} \in  \R^{ s_{\ell} \times r_{\mell}}$ and $\Xi_{\ell,\ff} \in \R^{ s_{\ell}  \times s_{\ell} }$ in the matrix $\cB_{\ell,r}^{-2}$.

We derive low-rank operations to avoid the direct inversion or
factorisation of the matrices $\cP_{\ell,\ff}$ and $\cP_{\ell, fc}$ in
the generation of conditional samples and to reduce the computational cost.
Suppose the block $\cZ_{\ell,f} \in \R^{ (R_\ell - R_{\mell}) \times s_\ell}$ has the thin QR factorisation 
\begin{equation}\label{eq:qr_cond}
\cZ_{\ell,f} = \cU_{\ell} \cT_\ell,
\end{equation}
where $\cU_\ell$ has orthonormal columns and $\cT_\ell$ is upper triangular.
Then the matrix $\cP_{\ell,\ff}$ can be expressed as
\[
\cP_{\ell,\ff} = b_\perp^{-2} \Big ( \cU_{\ell} \big( \cT_\ell (b_\perp^{2}\,\Xi_{\ell,\ff} - \cI ) \cT_\ell^\top \big) \cU_{\ell}^\top + \,\cI_{\ell,f} \Big).
\]
Computing the $s_{\ell} \times s_{\ell}$ eigendecomposition 
\begin{equation}\label{eq:eig_cond}
\cT_\ell (b_\perp^{2}\,\Xi_{\ell,\ff} - \cI ) \cT_\ell^\top = \cW_\ell \cD_\ell \cW_\ell^\top,
\end{equation}
where $\cW_\ell$ and $\cD_\ell$ are respectively orthogonal and diagonal matrices, we have
\[
\cP_{\ell,\ff} = b_\perp^{-2} \Big ( \Phi_\ell \, \cD_\ell \,
\Phi_\ell^\top + \,\cI_{\ell,f} \Big), \quad \text{with} \quad \Phi_\ell := \cU_{\ell} \cW_\ell .
\]
Note that $\Phi_\ell \in \R^{ (R_\ell - R_{\mell}) \times s_\ell}$ has
orthonormal columns, so that
\begin{align}
\cP_{\ell,\ff}^{-1} \cP_{\ell,fc} & = b_\perp^2 \, \Phi_\ell \underbrace{\big( (\cD_\ell + \cI)^{-1} \cW_\ell^\top \cT_\ell \big)}_{s_{\ell} \times s_{\ell} } \underbrace{\big( \Xi_{\ell,fc}  \basis_{\mell,r}^\top + \Xi_{\ell,\ff} \cZ_{\ell,c}^\top - b_\perp^{-2} \cZ_{\ell,c}^\top \big)}_{s_{\ell}  \times R_{\mell} }, \label{eq:PinvP}\\
\cP_{\ell,\ff}^{-\frac12} & = b_\perp \, \Big( \Phi_\ell \underbrace{ \big( (\cD_\ell + \cI)^{-\frac12} - \cI \big)}_{ s_{\ell} \times s_{\ell} } \Phi_\ell^\top + \cI_{\ell,f} \Big) . \label{eq:Phalf}
\end{align}
Using these representations of the matrices $\smash{
  \cP_{\ell,\ff}^{-1} \cP_{\ell,fc} }$ and $\smash{
  \cP_{\ell,\ff}^{-1/2}}$, the conditional Gaussian in
\eqref{eq:conditional_gaussian}  can be simulated efficiently using
\begin{equation}
\br_{\ell, f}^{} | \br_{\ell, c}^{} = - \cP_{\ell,\ff}^{-1}\,\cP_{\ell, fc}^{}\,\br_{\ell,c}^{} + \cP_{\ell,\ff}^{-\frac12} \xi, \textrm{\quad where\quad}\xi \sim \normal\big(0,\, \cI_{(R_{\ell} - R_{\mell})}\big).
\end{equation}
The associated computational cost is $\cO(R_{\ell} s_{\ell} )$.
%


\subsection{Final MLDILI algorithm}\label{sec:dili}

Here, we assemble all the elements of the multilevel DILI method
defined in the previous sections in algorithmic form.
For the base level ($\ell = 0$ ), the LIS construction and the DILI--MCMC sampling are presented in Algorithm \ref{algo:level_0}. The recursive LIS construction and the coupled DILI--MCMC are presented in Algorithm \ref{algo:level_l}.

\renewcommand{\algorithmicensure}{\textbf{Note:}}
\begin{algorithm}[t]
    \textbf{Input:} A set of samples $\mathcal{W}_0 =\{ \bv_0^{(k)} \}_{k = 1}^{K_0}$ drawn from the base level reference $p_0^\ast(\cdot)$, the number of MCMC iterations $N_0$, and an initial MCMC state $\bV_0^{(0)}$.\\
    \textbf{Output:} A LIS basis $\basis_{0, r}$ and a Markov chain of posterior samples $\mathcal{V}_{0} = \{ \bV_0^{(j)} \}_{j = 1}^{N_0}$.
  \begin{algorithmic}[1]
	\Procedure{Base level LIS and MCMC}{}
	\State Use $\mathcal{W}_0$ to solve the eigenproblem in \eqref{eq:exp_H0} to obtain the base level LIS basis $\basis_{0, r}$.
	\State Estimate the empirical covariance matrix $\Sigma_{0, r}$ from the samples in $\mathcal{W}_0$ and define the operators $\cA_0$ and $\cB_0$ as in \eqref{eq:operators_al}--\eqref{eq:operators_bl}.
	\For {$j = 1, \ldots, N_0$}
	\State Propose a candidate $\bv_0^\prime$ using the base level proposal in \eqref{eq:operator_weighted_l}.
	\State Compute the acceptance probability $\alpha(\bV_0^{(j -1 )},\bv_0^\prime)$ defined in \eqref{eq:accept_dili}.
	\State With probability $\alpha(\bV_0^{(j -1 )},\bv_0^\prime)$, set $\bV_0^{(j)} = \bv_0^\prime$, otherwise set $\bV_0^{(j)} = \bV_0^{(j-1)}$.
	\EndFor
	\EndProcedure
	\Ensure{Optionally, $\Sigma_{0, r}$, $\cA_0$ and $\cB_0$ can
          be adaptively updated within the MCMC after a pre-fixed number of iterations, cf. \cite{MCMC:AnMou_2006, MCMC:HST_2001}.}
	
  \end{algorithmic}
  \caption{Base level algorithm.}
  \label{algo:level_0}
\end{algorithm}

\begin{algorithm}[th]
    \textbf{Input:} A set of samples $\mathcal{W}_{\ell} = \{
    \bv_\ell^{(k)} \}_{k = 1}^{K_\ell}$ from the level--$\ell$
    reference $p_\ell^\ast(\cdot)$, the number of MCMC iterations
    $N_\ell$, a set of MCMC samples $\mathcal{V}_{\mell} = \{
    \bv_{\mell}^{(j)} \}_{j = 1}^{N_\mell}$ on level $\mell$ and an initial MCMC state $\bV_{\ell}^{(0)}$.\\
    \textbf{Output:} A LIS basis $\basis_{\ell, r}$ and a Markov chain of posterior samples $\mathcal{V}_{\ell} = \{ \bV_\ell^{(j)} \}_{j = 1}^{N_\ell}$.
  \begin{algorithmic}[1]
	\Procedure{Level--$\ell$ LIS and MCMC}{}
	\State Lift previous LIS basis, $\basis_{\ell, c} = \Theta_{\ell, c} \, \basis_{\mell, r}$.
	\State Use $\mathcal{W}_\ell$ to solve the eigenproblem
        in \eqref{eq:def_exp_Hl} to obtain the auxiliary LIS vectors $\basis_{\ell, f}$. 
	\State Estimate the empirical covariance matrix $\Sigma_{\ell,
          r}$ from the samples in $\mathcal{W}_\ell$ and define the operators $\cA_\ell$ and $\cB_\ell$ as in \eqref{eq:operators_al}--\eqref{eq:operators_bl}. \vspace{-1mm}
	\State Compute the matrices $ \cP_{\ell,\ff}^{-1}\,\cP_{\ell, fc}^{}$ and $\cP_{\ell,\ff}^{-\frac12}$ as in  \eqref{eq:PinvP}-\eqref{eq:Phalf}. 
	\For {$j = 1, \ldots, N_\ell$}
	\State Propose a candidate $\bv_\ell^\prime = (\bv_{\ell,c}^\prime, \bv_{\ell,f}^\prime)$ using Algorithm \ref{def:conditional_proposal}, which needs $\mathcal{V}_{\mell}$.
	\State Compute the acceptance probability $\alpha_\ell^{\rm
          ML}(\bV_\ell^{(j -1 )},\bv_\ell^\prime)$ defined in
        Corollary~\ref{coro:conditional}.
	\State With probability $\alpha_\ell^{\rm ML}(\bV_\ell^{(j -1 )},\bv_\ell^\prime)$, set $\bV_\ell^{(j)} = \bv_\ell^\prime$, otherwise set $\bV_\ell^{(j)} = \bV_\ell^{(j-1)}$.
	\EndFor
	\EndProcedure
 \Ensure{Optionally, $\Sigma_{\ell, r}$,
   $\cA_\ell$, $\cB_\ell$, and the matrices $
   \cP_{\ell,\ff}^{-1}\,\cP_{\ell, fc}^{}$ and
   $\cP_{\ell,\ff}^{-\frac12}$ can
          be adaptively updated within MCMC after a pre-fixed number of iterations.}
  \end{algorithmic}
  \caption{Level--$\ell$ algorithm.}
  \label{algo:level_l}
\end{algorithm}

In both algorithms, we need to use both the LIS basis $\basis_{\ell, r}$ and an empirical covariance matrix $\Sigma_{\ell, r}$ projected onto the LIS to define operators $\cA_\ell$ and $\cB_\ell$ in the DILI proposal. 
Computing the LIS basis needs some reference distribution $p_\ell^\ast(\cdot)$. We employ the Laplace approximation to the posterior (e.g., \cite{MCMC:MWBG_2012,MCMC:Petra_etal_2014}). This way, all the samples from $p_\ell^\ast(\cdot)$ can be generated in parallel and prior to the DILI--MCMC simulation. 
The empirical covariance $\Sigma_{\ell, r}$ can be estimated using either samples drawn from the reference distribution (before the start of MCMC) or adaptively using posterior samples generated in MCMC. The latter option is the classical adaptive MCMC method \cite{MCMC:HST_2001}. 
The adaptation of $\Sigma_{\ell, r}$ is optional in Algorithms \ref{algo:level_0} and \ref{algo:level_l}.
Similar to the adaptation of the covariance, the LIS basis can also be adaptively updated using newly generated posterior samples during MCMC simulations. The implementation details for the adaptation of the LIS can be found in Algorithm 1 of \cite{MCMC:CLM_2016}.

\subsection{Pooling strategy}\label{sec:pooling}

Finally, we present an alternative proposal strategy that fully exploits the power of multilevel MCMC but reduces the dependencies of samples on different levels for a better parallel performance. In this pooling strategy, we simulate coupled multilevel Markov chains level-by-level. 

\blue{Given a set of posterior samples $\mathcal{V}_{\mell} = \{\bv_{\mell}^{(i)} \}_{i =1}^{N_{\mell}}$ on level $\mell$ with $\bv_{\mell}^{(i)}  \sim \pi_{\mell}(\cdot | \by)$, we again generate $N_\ell$ samples on level $\ell$ using the multilevel proposal mechanism \eqref{eq:coupling_proposal}--\eqref{eq:proposal_2} with conditional DILI proposals as described in Algorithm \ref{alg:conditional_dili}. However, here the inputs to Algorithm \ref{alg:conditional_dili}, i.e., the proposals $\bv'_{\ell,c}$, are drawn uniformly at random (with replacement) from the set $\mathcal{V}_{\mell}$, in contrast to using proposals $\bv'_{\mell}  \sim \pi_{\mell}(\cdot | \by)$ from a sub-sampled Markov chain on level $\mell$, as discussed in Section \ref{sec:coupling} above. 
Thus, in this pooling strategy the empirical distribution of the samples in $\mathcal{V}_{\mell}$ is used as an approximation of $\pi_{\mell}(\cdot | \by)$. 

Due to variance reduction from level to level in the multilevel MCMC algorithm (cf.~eqn.\eqref{n_l}) and the excellent mixing of our MLDILI algorithm, the effective sample size of $\mathcal{V}_{\mell}$ will in general be significantly larger than the number of samples $N_\ell$ in the sample set $\mathcal{V}_{\ell} = \{\bv_{\ell}^{(i)} \}_{i =1}^{N_{\ell}}$ that we plan to generate at level $\ell$. Thus, after some burn-in phase the set $\mathcal{V}_{\mell}$ will contain (approximately) independent samples from the coarse level posterior which are needed in the construction of the Markov chain on level $\ell$ in Algorithm \ref{algo:level_l} (Line 7).} 

With the pooling strategy, it is possible to run multiple Markov
chains at the coarse level and form the pool using the union of coarse
level samples. It parallelises much more easily and also provides
flexibility if the user decides to run further refined levels to
improve the discretisation accuracy---one can simply reuse the pool of
previously computed samples before the refinement as the coarse level
proposal. Despite the practical usefulness, we note that the formal
proof of convergence of the pooling strategy remains unclear and will
need to be addressed in future research.  


\section{Numerical experiments}\label{sec_num_exp}

In this section, the algorithms are tested on a model problem involving an elliptic PDE with random coefficients described in section \ref{sec:setup}. Numerical comparisons are then given in section~\ref{sec:comparisons}. 

\subsection{Setup}\label{sec:setup}
We consider an elliptic PDE in a domain $\Omega=[0,1]^2$ with boundary $\partial \Omega$, which models, e.g., the pressure distribution $p(\bx)$ of a stationary fluid in a porous medium described by a spatially heterogeneous permeability field $k(\bx)$. Here, $\bx\in \Omega$ denotes the spatial coordinate and $\bn(\bx)$ denotes the outward normal vector along the boundary. 

The goal is to recover the permeability field from pressure observations. We assume that the permeability field follows a log--normal prior, and thus we denote the permeability field by $k(\bx) = \exp(u(\bx))$, where $u(\bx)$ is a random function equipped with a Gaussian process prior. 
In this setting, the pressure $p(\bx)$ depends implicitly on the (random) realisation of $u(\bx)$. 

For a given realisation $u(\bx)$, the pressure satisfies the elliptic PDE
\begin{equation}\label{eq:PDE}
-\nabla\cdot\left(e^{u(\bx)}\nabla p(\bx)\right) = 0, \quad \bx\in \Omega.
\end{equation}
On the left and right boundaries, we specify Dirichlet boundary conditions, while on the top and bottom we assume homogeneous Neumann  boundary conditions:
\begin{equation}\label{eq:PDE_bc}
\begin{cases}
~~p(\bx) = 0, &\ \ \text{for} \ \ \bx\in\partial \Omega_{\text{left}}\,, \\
~~p(\bx) = 1, &\ \ \text{for} \ \ \bx\in\partial \Omega_{\text{right}} \ \ \text{and} \\
~~e^{u(\bx)} \nabla p(\bx)\cdot \bn(\bx) = 0, &\ \ \text{for} \ \ \bx\in\{\partial \Omega_{\text{top}}, \partial \Omega_{\text{bottom}}\}.
\end{cases}
\end{equation}
As the quantity of interest, we define the outflow through the left vertical boundary, i.e.
\begin{equation}\label{eq:PDE_qoi}
Q(u) = -\int_0^1 e^{u(\bx)}\frac{\partial p(\bx)}{\partial x_1}\Big|_{x_1=0}\,dx_2\, = - \int_\Omega e^{u(\bx)} \nabla p(\bx)\cdot\nabla \varphi(\bx)\,d\bx\,,
\end{equation} 
where $\varphi(\bx)$ is a linear function taking value one on $\partial \Omega_{\text{left}}$ and zero on $\partial \Omega_{\text{right}}$, as suggested in \cite{Sto:teckentrup2013}.

The Gaussian process prior for $u(\bx)$ is defined by the exponential kernel $k(\bx, \bx') = \exp(-5|\bx - \bx'|)$. Figure \ref{fig:fields} (left) displays the true (synthetic) permeability field in $\log_{10}$ scale. Noisy observations of the pressure field are collected from 71 sensors located as in Figure \ref{fig:fields} (right), with a signal-to-noise ratio 50. A likelihood function can then be defined as in \eqref{eq:like}, which, together with the prior, characterises the posterior distribution in \eqref{eq:post}.

\begin{figure}[t]
	\centering
	\includegraphics[scale = 0.38]{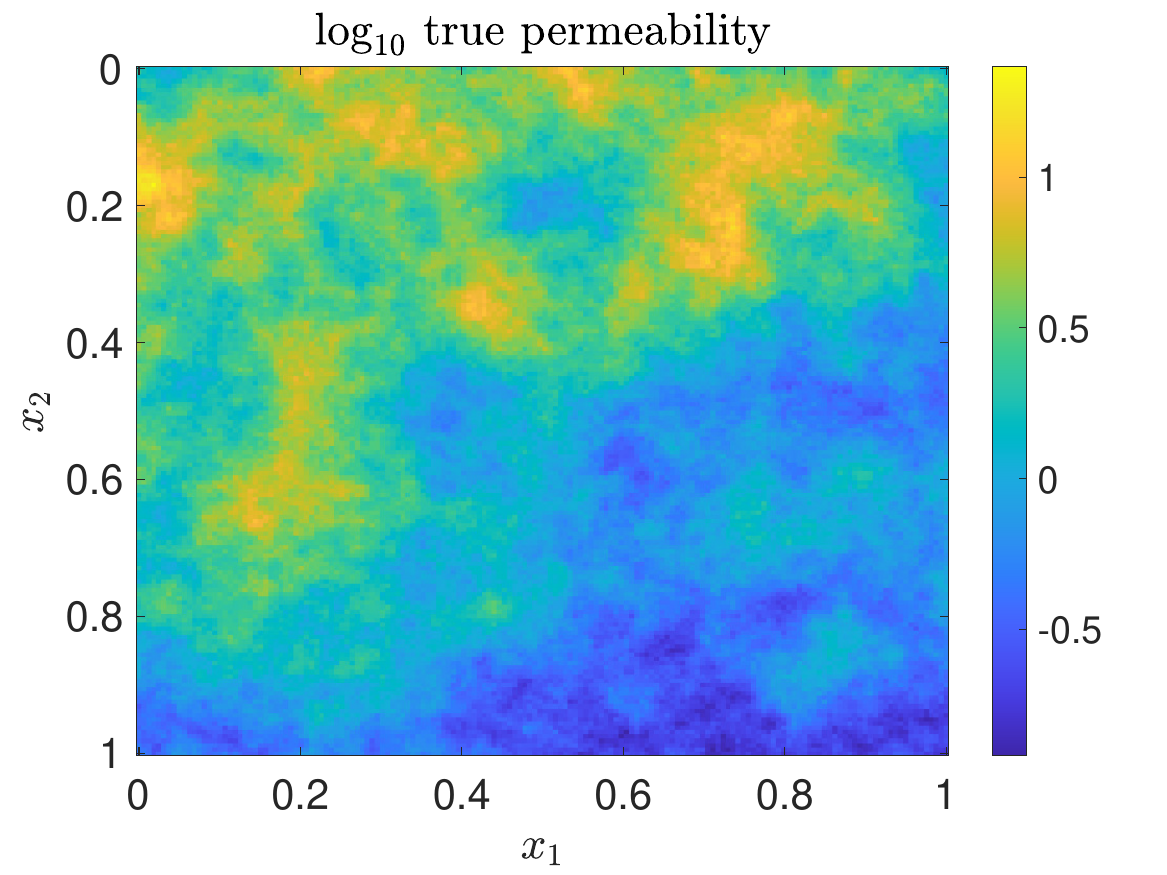}\includegraphics[scale = 0.38]{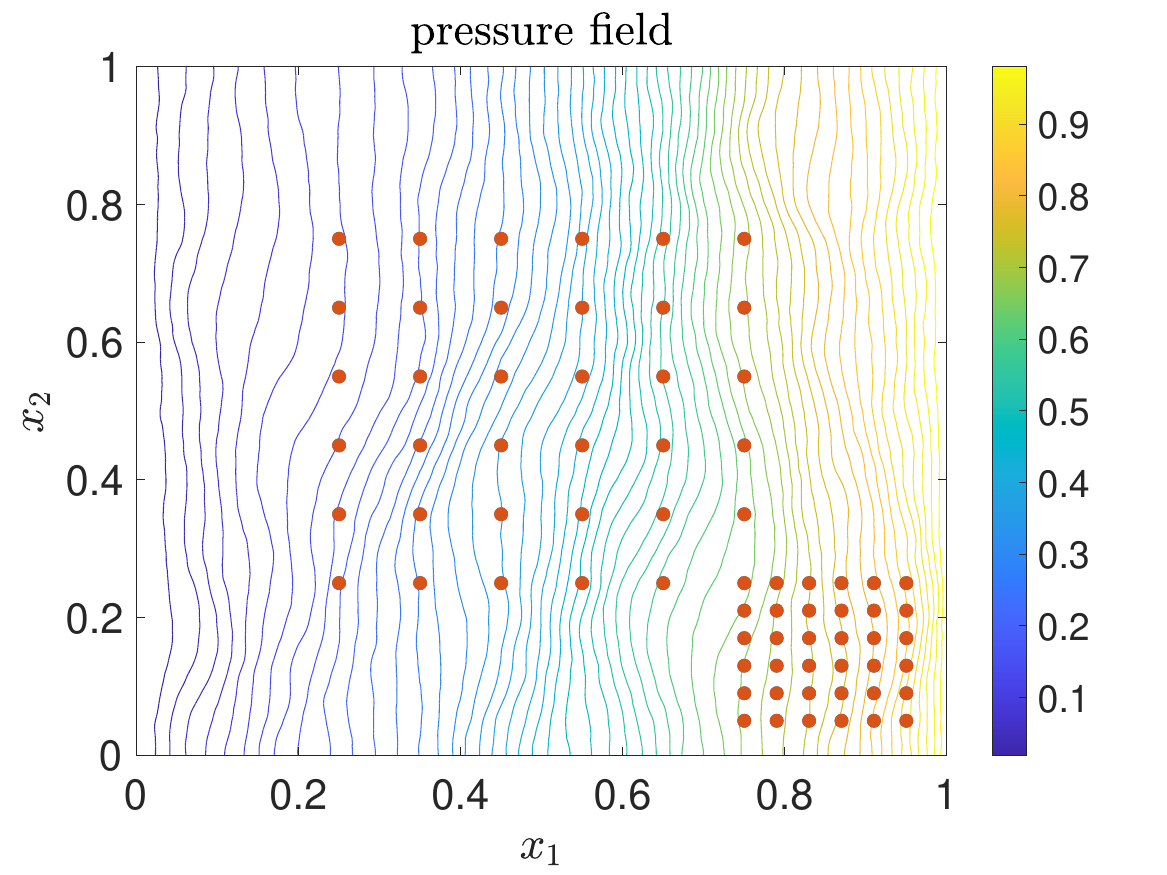}\\
	\includegraphics[scale = 0.38]{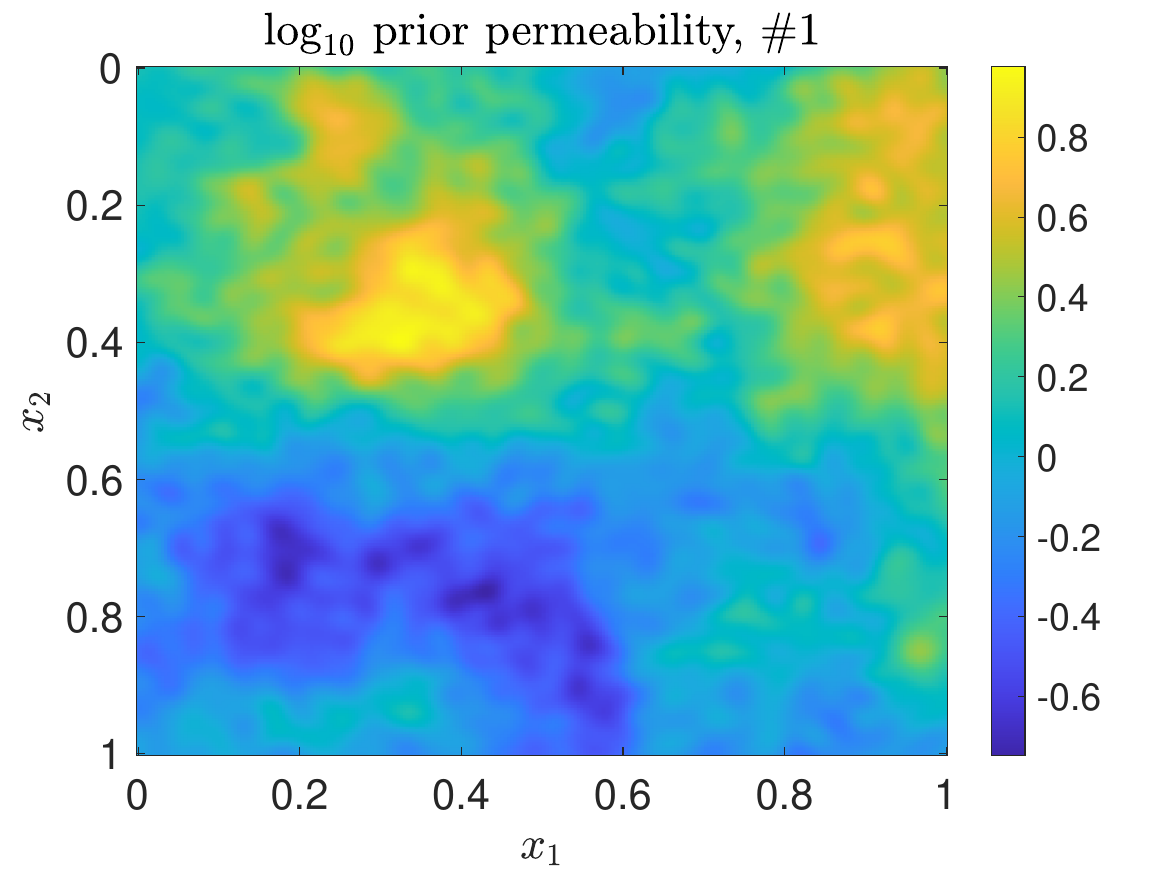}\includegraphics[scale = 0.38]{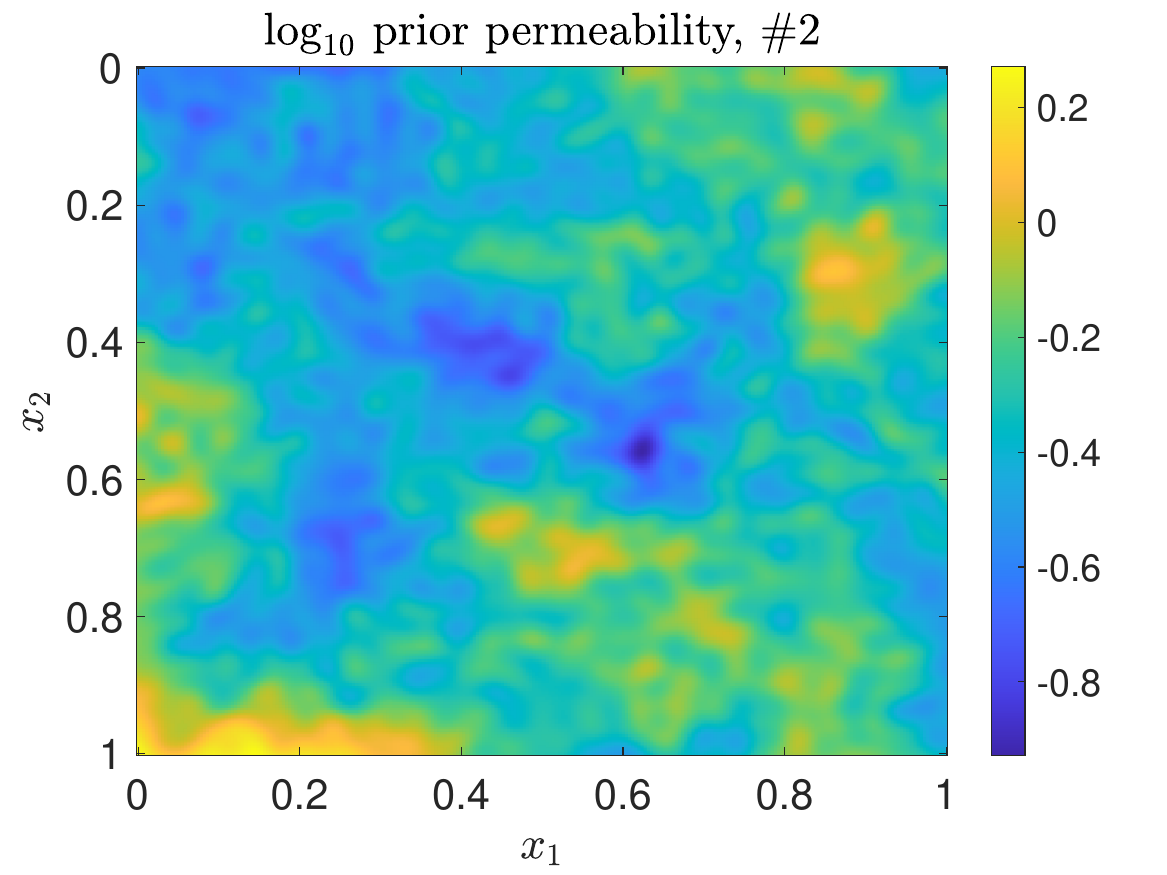}\\
	\caption{Setup of elliptic inverse problem. Top left: ``true'' permeability field used for generating the synthetic data set. Top right: observation sensors (red dots) and pressure field corresponding to ``true'' permeability field. \blue{Bottom row: Realisations of the permeability drawn from the prior.}\label{fig:fields}}
\end{figure}

In practice, \eqref{eq:PDE}--\eqref{eq:PDE_qoi} has to be solved numerically. We use standard, piecewise bilinear finite elements (FEs) on a hierarchy of nested Cartesian grids with mesh size $h_\ell = \tfrac{1}{20} \times 2^{-\ell}$, for $\ell = 0,1,2,3$. Furthermore, we approximate the unknown function $u(\bx)$ by truncated Karhunen-Lo{\`e}ve expansions with $R_\ell = 50  + 100 \times 2^\ell$ random modes, respectively. 

\subsection{Comparisons} \label{sec:comparisons}
Let us now test and compare our algorithms on the model problem described above.
First, we proceed as in section \ref{sec_MLLIS} to build a LIS at every level, using both the non-recursive and recursive constructions. Table \ref{table:LISbases} summarises the number of basis functions obtained in each case with truncation threshold $\rho=10^{-2}$, as well as the storage reduction factor given by the recursive procedure at each level. 

Because the recursive LIS construction recycles LIS bases from previous levels and enriches them with a number of auxiliary LIS vectors on each level, it is expected that the total number of basis functions obtained by the enriching procedure at each level is slightly higher than the direct (spectral) LIS on the same level.
However, in the recursive construction, the dimension of the auxiliary 
set of vectors is expected to decrease as the level increases, requiring less storage and less computational effort on finer levels, since the posterior distributions were assumed to converge with $\ell \to \infty$. 
For problems with parametrisations where the parameter dimension increases more rapidly with the discretisation level---e.g., using the same FE grid to discretise the prior covariance, the setting used in the original DILI paper~\cite{MCMC:CLM_2016}---we  expect the reduction factor to be even smaller. 

\begin{table}[t]
	\begin{center}
	\begin{tabular}{l|l l lll} 
		\toprule
		Level & 0 & 1 & 2 & 3\\
		\midrule
		Non-recursive  & 80 & 91 & 97 & 100 \\
		Recursive (added on level $\ell$)  & 80 & 21 & 19 & 12\\ 
		Recursive (total)   & 80 & 101 & 120 & 132\\ 
                \midrule
		Storage reduction factor & 1 & 0.74 & 0.60 & 0.43\\
		\bottomrule
	\end{tabular}
	\caption{LIS dimensions: Results of non-recursive construction (single-level LIS for each $\ell$) reported in first row; for the recursive construction, the number of vectors added on the current level and the total LIS dimension are given in the second and third row, respectively; the fourth row displays the storage reduction factor for the recursive procedure at each level.\label{table:LISbases}}
	
	\end{center}
\end{table}

In the comparison of sampling performances, we denote by {\bf MLpCN} the MLMCMC algorithm using the pCN proposal for the additional parameters on each level (as in \cite{MCMC:KST_2013}). The MLMCMC algorithm using the recursive LIS and the coupled DILI proposals, as summarised in Algorithms \ref{algo:level_0} and \ref{algo:level_l}, is denoted by {\bf MLDILI}. 
The integrated autocorrelation times of Markov chains constructed by MLpCN and MLDILI are reported in Table \ref{tab:IACTs}.
The IACTs for two functionals are reported for each algorithm. In the ``refined parameters'' case, at every level $\ell$ we report the average IACTs of the refined parameters $\bv_{\ell,f}$. This quantifies how well the algorithm performs in exploring the posterior distribution. In the second case, we consider the IACT of the level-$\ell$ corrections of the quantity of interest $D_\ell = Q_\ell(\rvl) - Q_{\mell}(\rvml)$.
\begin{table}[t]
	\begin{center}
	\begin{tabular}{l|ll|ll}
		& \multicolumn{2}{c|}{Refined parameters} & \multicolumn{2}{c}{$D_\ell$} \\ \hline
		{\scriptsize Level} & {\scriptsize MLDILI}                                        & {\scriptsize MLpCN}                     & {\scriptsize MLDILI}                       & {\scriptsize MLpCN}    \\ \hline
		0 & \multicolumn{1}{l|}{34}                       & 4300                       & \multicolumn{1}{l|}{9.0}     & 4100     \\
		1 & \multicolumn{1}{l|}{11}                       & 45                         & \multicolumn{1}{l|}{4.6}     & 4.9      \\
		2 & \multicolumn{1}{l|}{3.6}                      & 48                         & \multicolumn{1}{l|}{2.4}     & 2.8      \\
		3 & \multicolumn{1}{l|}{2.0}                      & 24                         & \multicolumn{1}{l|}{1.8}     & 1.9     
	\end{tabular}
\caption{Comparison of IACTs of Markov chains generated by MLDILI and MLpCN. This table reports the IACTs of the refined parameters and the level-$\ell$ correction of the quantity of interest $D_\ell = Q_\ell(\rvl) - Q_{\mell}(\rvml)$.}
\label{tab:IACTs}
\end{center}
\end{table}

In the ``refined parameters'' case, we observe a significant improvement for MLDILI over MLpCN: the coupled DILI proposal is able to reduce the IACT at every level compared to that obtained by MLpCN. 
At the base level, DILI is able to reduce the IACT by two orders of magnitude compared to that of pCN. This suggests that coarse parameter modes are very informed by the data, and thus utilising the DILI proposal is highly beneficial. 
In the case of the quantity of interest, we observe an even more impressive improvement at the base level (a factor of $456$!), while the IACTs of MLDILI and MLpCN on the finer levels are comparable. 
This suggests that the posterior distribution of the chosen quantity of interest (the integrated flux over the boundary) is not affected strongly by the high frequency parameter modes on the finer levels.
Nevertheless, in both cases, using DILI provides a huge acceleration compared to pCN.
Figure \ref{fig:iacts} compares the integrated autocorrelation times of DILI and pCN on level 0, for both the first parameter component and the quantity of interest.

\begin{figure}[t]
	\centering
	\includegraphics[scale = 0.48]{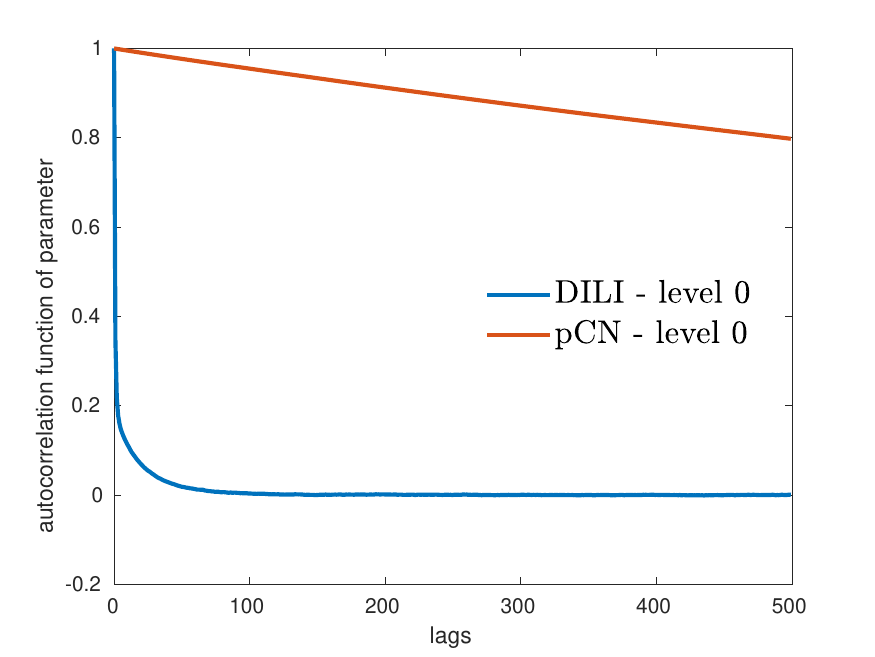}\hspace*{-0.15cm}\includegraphics[scale = 0.48]{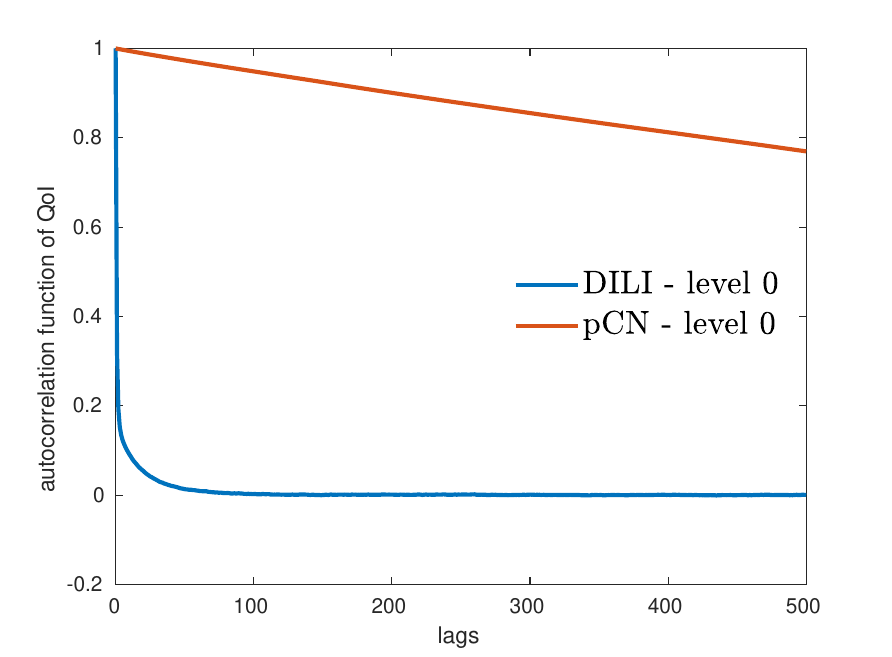}
	
	\caption{Autocorrelation functions of the chains $\{ \big(\bV^{(j)}_{0}\big)_1 \}$ and $\{Q_0(\bV^{(j)}_0)\}$ on the coarsest level (Blue: DILI. Red: pCN). \label{fig:iacts}}
\end{figure}

The IACTs for the level-$\ell$ corrections of the quantity of interest in Table \ref{tab:IACTs} suggest that using a mixed strategy---in which one employs the LIS and DILI only at the coarsest level and uses pCN in refined levels---is also a reasonable approach in cases where the important likelihood-informed directions that have any influence on the quantity of interest are already well enough identified in the  base-level LIS.
We refer to this as the {\bf MLmixed} strategy. 

We compare the computational performance of the three multilevel algorithms (MLDILI, MLpCN, MLmixed) with the two single level algorithms using DILI and pCN proposals.
The finite element model and all MCMC algorithms are implemented in MATLAB; we use sparse Cholesky factorisation \cite{Lin:Cholmod} to solve the finite element systems and ARPACK \cite{Lin:ARPACK} to solve the eigenproblems.    
All simulations are carried out on a workstation equipped with 28 cores (two Intel Xeon E5-2680 CPUs).
The performance of MLmixed is only estimated using the IACTs and the actual computing times measured in the MLDILI and MLpCN runs.

The computational complexities of the five algorithms for approximating $\Ev_{\pi}[Q]$ on (discretisation) levels $L = 1, 2$ and $3$ with $Q$ defined in \eqref{eq:PDE_qoi} are compared in Figure \ref{fig:varred_accsamples} (right). In the multilevel estimators, the coarsest level is always $\ell=0$, so that the number of levels is $2,3$ and $4$, respectively.
The sampling error tolerance on each level is adapted to the corresponding bias error due to finite element discretisation and parameter truncation, such that the squared bias is equal to the variance of the estimator. The bias errors were estimated beforehand to be $9\times 10^{-3}$, $4\times 10^{-3}$, and $2\times 10^{-3}$ on levels $L = 1, 2$ and $3$, leading to a total error of $1.27\times 10^{-2}$, $5.7\times 10^{-3}$, $2.8\times 10^{-3}$, respectively. Those bias estimates are plotted in Figure \ref{fig:varred_accsamples} (left) together with estimates of $\Var_{\pi_\ell}(Q_\ell)$ and $\Var_{\Delta_{\ell, \mell}}( Q_\ell - Q_{\mell} )$, which suggest that $\theta_b \approx 0.5$ and $\theta_v \approx 0.5$ in Assumptions \ref{assum_bias}(i) and \ref{assum_within}. This agrees with the theoretical results in \cite{MCMC:KST_2013}. The cost per sample is dominated by the sparse Cholesky factorisation on each level and scales roughly like $\mathcal{O}(M_\ell^{1.2})$, so that $\theta_c \approx 1.2$ in Assumption \ref{assum_bias}(ii). Optimally scaling multigrid solvers exist for this model problem, but for the FE problem sizes considered here they are more costly in absolute terms. Moreover, we can also exploit the fact that the adjoint problem is identical to the forward problem here, so that the Cholesky factors can be reused for the adjoint solves required in the LIS construction.

Let us now discuss the results. Single level pCN becomes impractical in this example, since the data is very  informative and leads to an extremely low effective sample size. Some of this bad statistical efficiency is inherited by MLpCN, at least in absolute terms, due to the poor effective sample size on level $0$. Asymptotically this effect disappears and the rate of growth of the cost is smallest for MLpCN with an observed assymptotic cost of about $\mathcal{O}(\epsilon^{-2.3})$. As observed in \cite{MCMC:KST_2013}, this is better than the theoretically predicted asymptotic rate and is likely a pre-asymptotic effect due to the high cost on level 0. 
Unsurprisingly, given the low IACTs reported in Table~\ref{tab:IACTs}, the methods based on DILI proposals all perform significantly better. MLDILI and MLmixed perform almost identically, since the corresponding IACTs on all levels are very similar. They are consistently better than single-level DILI and the asymptotic rate of growth of the cost is also better, $\mathcal{O}(\epsilon^{-3.4})$ versus $\mathcal{O}(\epsilon^{-4.1})$. Both rates are consistent with the theoretically predicted rates in Theorem~\ref{thm:mlmcmc}, given the estimates for $\theta_b, \theta_v, \theta_c$ above. For the highest accuracies, MLDILI is almost 4 times faster than DILI, and due to the better asymptotic behaviour this reduction factor will grow as $\varepsilon \to 0$. For grid level $L=4$, even MLpCN is expected to outperform single-level DILI, but the computational costs of the estimators for higher accuracies are starting to become impractical even using the multilevel acceleration, as the dashed line representing one CPU day in Figure \ref{fig:varred_accsamples} (right) indicates. 

\begin{figure}[t]
	\centering
	\includegraphics[scale = 1]{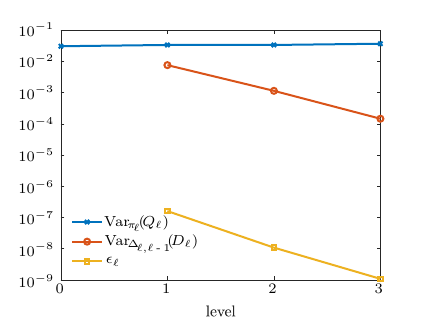}
	\includegraphics[scale = 1]{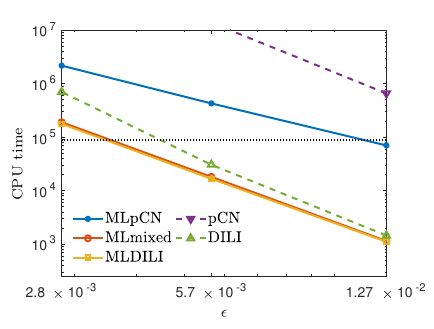}
	\caption{Left: the variance $\Var_{\pi_\ell}(Q_\ell)$ (blue) and the bias $\epsilon_\ell = \big| \Ev_{\mu_y}\big[Q\big] - \Ev_{\pi_\ell}\big[Q_\ell\big] \big|$ (yellow) at each level, and the cross-level variances $\Var_{\Delta_{\ell, \mell}}( D_{\mell} ) = \Var_{\Delta_{\ell, \mell}}( Q_\ell - Q_{\mell} )$ (red) used for estimating the CPU time for various MCMC methods. Right: Total CPU time (in seconds) for various methods to achieve different total error tolerances. The LISs are constructed by recycling Cholesky factors. The dotted line represents a CPU day.\label{fig:varred_accsamples}}
\end{figure}

The dominating cost in solving the eigenproblems \eqref{eq:exp_H0} and \eqref{eq:def_exp_Hl} is the Cholesky factorisation. As mentioned above, sparse direct solvers are used to solve the stationary forward model and we are able to recycle the Cholesky factors from the forward solve to compute the actions of the adjoint model in \eqref{eq:exp_H0} and \eqref{eq:def_exp_Hl} for each sample. 
As a result, the computational cost of building the LIS is negligible compared to that of the MCMC simulation here (for both the single level and the recursive construction).
This also explains why MLmixed performs almost identically to MLDILI.

\begin{figure}[t]
	\centering
	\includegraphics[scale = 1]{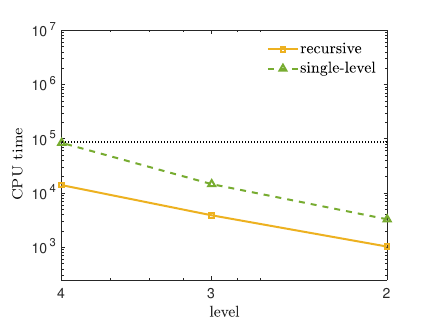}
	\includegraphics[scale = 1]{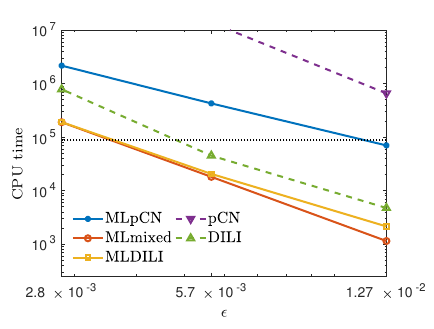}
	\caption{Left: Total CPU time (in seconds) for the single level and recursive constructions of the LISs at level $2, 3$ and $4$. Right: Total CPU time (in seconds) for various methods to achieve different error tolerances. The LISs are constructed without recycling Cholesky factors. The dotted line represents a CPU day.\label{fig:varred_accsamples_new}}
\end{figure}

However, in many other applications this is not possible due to the high storage cost or when the adjoint is different. Each action of the adjoint problem typically has a comparable cost to solving the forward model in the stationary case. It can even be more expensive than solving the forward model in time-dependent problems. 
To provide a thorough comparison in that case, we also report the total CPU time of all the estimators in Figure \ref{fig:varred_accsamples_new} when the LIS setup cost is included. Here, we compute both the single level LIS and the recursive LIS without storing the Cholesky factors, to mimic the behaviour in the general, large-scale case.
In this setup, we observe that a significant amount of computing effort is spent on building the LIS, and thus MLmixed and MLDILI significantly outperform the single level DILI for all error thresholds. MLmixed is more than 4 times faster than DILI even for the largest error threshold of $1.27 \times 10^{-2}$. The construction of the single-level LIS requires two times more CPU time than performing the actual MCMC simulation in that case. 
In comparison, a significant number of adjoint model solves can be saved by the recursive LIS construction. 
Furthermore, we do expect that the computational cost for constructing the recursive LIS will stop increasing, since the dimension of the auxiliary LIS will eventually be zero at higher levels.
Overall, for large--scale problems where the adjoint cannot be cheaply computed by recycling the forward model simulation, the recursive LIS construction, and hence the MLDILI, is clearly more computationally efficient than the single level DILI.


\section{Conclusion}\label{sec_conclusion}

We integrate the dimension-independent likelihood-informed MCMC from \cite{MCMC:CLM_2016} into the multilevel MCMC framework in
\cite{MCMC:KST_2013} to improve the computational efficiency of
estimating the expectation of functionals of interests over posterior measures.
Several novel elements are introduced in this integration. We first
design a Rayleigh-Ritz procedure to recursively construct likelihood
informed subspaces that exploit the hierarchy of model
discretisations. 
The resulting hierarchical LIS needs lower computational effort to
construct and has lower operation cost compared to the original LIS 
proposed in \cite{DimRedu:Cui_etal_2014}.
Then, we present a new pooling strategy to couple Markov chains on
consecutive levels. This enables more flexible parallelisation and
management of computing resources. 
Finally, we design new coupled DILI proposals by exploiting the
hierarchical LIS, so that the DILI proposal can be applied in the
multilevel  MCMC setting. 
We also demonstrate the efficacy of our integrated approach on a model
inverse problem governed by an elliptic PDE.

\section*{Data availability statement}

No new data were created or analysed in this study.

\section*{Acknowledgements}
TC acknowledges support from the Australian Research Council under the
grant DP210103092. GD was supported by the EPSRC Centre for Doctoral
Training in Statistical Applied Mathematics at Bath (EP/L015684/1). RS
acknowledges support by the Deutsche Forschungsgemeinschaft (German
Research Foundation) under Germany’s Excellence Strategy EXC 2181/1 --
390900948 (the Heidelberg STRUCTURES
Excellence Cluster).

{
\section*{References}

}

\bigskip\bigskip


\renewcommand{\appendix}{\par
  \setcounter{section}{0}
  \setcounter{subsection}{0}
  \gdef\thesection{\Alph{section}}
}

\appendix

\section{\blue{Computational complexity of hierarchical LIS}}\label{sec:LIS_cost}
Here we develop heuristics---under the following set of restrictive assumptions---to compare the complexities of the construction of the hierarchical LIS and of the single-level LIS, constructed directly on level $L$. 
\begin{assumption}\label{assum:LIS}
\;
\begin{enumerate}
	\item \!The parameter dimensions satisfy $R_\ell = R_0  e ^ {\beta_{\rm p} \ell}$ for some  $\beta_{\rm p} > 0$.\!
	\item The number of auxiliary LIS basis vectors satisfies $s_\ell \leq s_0 \, e ^ { - \beta_{\rm r} \ell}$ for some $\beta_{\rm r} > 0$. 
  \item The degrees of freedom in the forward model satisfy $M_\ell
    = M_0 \, e ^ {\beta_{\rm m} \ell}$ for some  $\beta_{\rm m} > 0$. 
\item The computational cost of a matrix vector product with one sample of the Gauss-Newton Hessian
  $\hessian_\ell(\bv_\ell^{(k)})$ is proportional to one evaluation of
  the forward model and thus $\cO(M_\ell^{\vartheta_{\rm c}})$
  (cf.~Assumption \ref{assum_bias}).
  \item The number of samples to compute the sample-averaged
    Gauss-Newton Hessian is the same on all levels, i.e., $K_\ell = K$ independent of $\ell$.
	\item For the single-level LIS constructed on level $L$, we
          assume that the LIS dimension satisfies
          $r^{\rm single}_L \geq c\,r_0 $ for some constant $c > 0$. 
\end{enumerate}
\end{assumption}

The storage cost of the hierarchical LIS basis and the storage cost of
the single-level LIS basis on level $L$ are, respectively, 
\[
\zeta_{\rm multi} = {\textstyle \sum_{l = 0}^{L}}\, R_\ell\,s_\ell, \quad \textrm{and} \quad \zeta_{\rm single} = \, R_L\, r^{\rm single}_L\,.
\]
The floating point operations for one matrix vector product with the hierarchical LIS
basis and with the single-level LIS basis are $O\big( \zeta_{\rm
  multi} \big)$ and $O\big( \zeta_{\rm single} \big)$, respectively,
with the same hidden constant.

\begin{corollary}\label{coro:speedup}
The reduction factor of storing and operating with the hierarchical
LIS basis (as opposed to the standard single-level LIS on level $L$) satisfies the upper bound 
\begin{equation}
\frac{\zeta_{\rm multi}}{\zeta_{\rm single} } \leq \frac{1}{c}\,
\min\Big( L+1, \frac{1}{1 - e^{ - | \beta_{\rm p} - \beta_{\rm r} |}
}\Big) \,e^{- \min(\beta_{\rm p}, \beta_{\rm r}) L} \,.
\end{equation}
\end{corollary}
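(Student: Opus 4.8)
The plan is to bound the ratio $\zeta_{\rm multi}/\zeta_{\rm single}$ directly by substituting the parametric assumptions from Assumption~\ref{assum:LIS} and then summing a geometric-type series. First I would write $\zeta_{\rm multi} = \sum_{\ell=0}^L R_\ell s_\ell$ and use items (1) and (2), namely $R_\ell = R_0 e^{\beta_{\rm p}\ell}$ and $s_\ell \le s_0 e^{-\beta_{\rm r}\ell}$, to get
\[
\zeta_{\rm multi} \ \le\ R_0 s_0 {\textstyle\sum_{\ell=0}^L} e^{(\beta_{\rm p} - \beta_{\rm r})\ell}.
\]
For the denominator, items (1) and (6) give $\zeta_{\rm single} = R_L r^{\rm single}_L \ge c\, r_0 R_L = c\, r_0 R_0 e^{\beta_{\rm p} L}$, and since $s_0 = r_0$ by the convention introduced after \eqref{eq:upper_tri_lis}, the prefactors $R_0 s_0$ in the numerator cancel against $c\,r_0 R_0$ in the denominator up to the factor $1/c$. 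Hence
\[
\frac{\zeta_{\rm multi}}{\zeta_{\rm single}} \ \le\ \frac{1}{c}\, e^{-\beta_{\rm p} L} {\textstyle\sum_{\ell=0}^L} e^{(\beta_{\rm p}-\beta_{\rm r})\ell}.
\]

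The remaining task is to bound the geometric sum $S_L \equiv \sum_{\ell=0}^L e^{(\beta_{\rm p}-\beta_{\rm r})\ell}$ by the claimed $\min\!\big(L+1,\ (1-e^{-|\beta_{\rm p}-\beta_{\rm r}|})^{-1}\big)\,e^{\max(\beta_{\rm p}-\beta_{\rm r},0)L}$ and then combine with the outer $e^{-\beta_{\rm p}L}$. I would handle this by cases on the sign of $\beta_{\rm p}-\beta_{\rm r}$. If $\beta_{\rm p}=\beta_{\rm r}$, every term equals $1$, so $S_L = L+1$ and the prefactor $e^{-\beta_{\rm p}L}$ becomes $e^{-\min(\beta_{\rm p},\beta_{\rm r})L}$; the $L+1$ alternative in the $\min$ is what is active. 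If $\beta_{\rm p}>\beta_{\rm r}$, the sum is dominated by its last term: $S_L = \frac{e^{(\beta_{\rm p}-\beta_{\rm r})(L+1)}-1}{e^{\beta_{\rm p}-\beta_{\rm r}}-1} \le e^{(\beta_{\rm p}-\beta_{\rm r})L}\cdot\frac{1}{1-e^{-(\beta_{\rm p}-\beta_{\rm r})}}$; multiplying by $e^{-\beta_{\rm p}L}$ leaves $e^{-\beta_{\rm r}L} = e^{-\min(\beta_{\rm p},\beta_{\rm r})L}$, and the second entry of the $\min$ is active. If $\beta_{\rm p}<\beta_{\rm r}$, the sum is dominated by its first term: $S_L \le \sum_{\ell=0}^\infty e^{-(\beta_{\rm r}-\beta_{\rm p})\ell} = \frac{1}{1-e^{-(\beta_{\rm r}-\beta_{\rm p})}}$; here $e^{\max(\beta_{\rm p}-\beta_{\rm r},0)L}=1$ and the $e^{-\beta_{\rm p}L}$ prefactor gives $e^{-\min(\beta_{\rm p},\beta_{\rm r})L}$ again. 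In all three cases $S_L \le \min(L+1, (1-e^{-|\beta_{\rm p}-\beta_{\rm r}|})^{-1})\,e^{\max(\beta_{\rm p}-\beta_{\rm r},0)L}$, and the outer exponent combines as $\max(\beta_{\rm p}-\beta_{\rm r},0)-\beta_{\rm p} = -\min(\beta_{\rm p},\beta_{\rm r})$, yielding exactly the stated bound.

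Strictly, the $L+1$ branch is a valid (sometimes loose) upper bound in all cases — one always has $S_L \le (L+1)e^{\max(\beta_{\rm p}-\beta_{\rm r},0)L}$ by bounding every term by the largest — so the two candidates in the $\min$ can be verified independently and the better one retained. The only mild subtlety, and the step I would be most careful about, is keeping the bookkeeping of exponents straight: verifying that $\max(\beta_{\rm p}-\beta_{\rm r},0)-\beta_{\rm p}=-\min(\beta_{\rm p},\beta_{\rm r})$ and that $|\beta_{\rm p}-\beta_{\rm r}|$ is the right quantity inside the geometric-ratio factor regardless of the sign. There is no genuine analytic difficulty here — it is a finite geometric sum estimate — so I would present it compactly, noting only that the remark ``with the same hidden constant'' preceding the corollary lets us treat storage cost and MVP flop count identically, so a single bound covers both claims.
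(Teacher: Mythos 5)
Your proposal is correct and follows essentially the same route as the paper's proof in Appendix~\ref{sec:LIS_cost}: bound $\zeta_{\rm multi}$ and $\zeta_{\rm single}$ via Assumption~\ref{assum:LIS}, reduce to the geometric sum $e^{-\beta_{\rm p}L}\sum_{\ell=0}^{L}e^{(\beta_{\rm p}-\beta_{\rm r})\ell}$, and split into cases on the sign of $\beta_{\rm p}-\beta_{\rm r}$. The only cosmetic difference is that you verify the two branches of the $\min$ separately, whereas the paper writes the sum as $(1-a^{L+1})/(1-a)$ with $a=e^{-|\beta_{\rm p}-\beta_{\rm r}|}$ and bounds that single expression by the $\min$; both are equivalent.
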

\begin{proof}

Using Assumption \ref{assum:LIS}, the required storage for the
hierarchical and for the single-level LIS bases can be bounded by
\begin{align*}
\zeta_{\rm multi}  = {\textstyle \sum_{l = 0}^{L}}\, R_\ell\,s_\ell
                    \leq  R_0 \, s_0 {\textstyle \sum_{l = 0}^{L}} \,
                    e^{ (\beta_{\rm p} - \beta_{\rm r}) \ell} \ \ \
                    \text{and} \ \ \ 
\zeta_{\rm single}  = \, R_L\, r_L \geq  c\, R_0 \, s_0\, e ^ {\beta_{\rm p} L}.
\end{align*}
Thus, the reduction factor satisfies
\begin{equation}
\label{app:eq1}
\frac{\zeta_{\rm multi}}{\zeta_{\rm single} } \leq \frac{1}{c} \, e^{ -\beta_{\rm p} L}\Big({\textstyle \sum_{l = 0}^{L}} \, e ^ { (\beta_{\rm p} - \beta_{\rm r}) \ell}\Big).
\end{equation}
We first consider the case $\beta_{\rm p} \neq \beta_{\rm r}$. Using the property of geometric series, we have 
\[
{\textstyle \sum_{l = 0}^{L}} \, e^{ (\beta_{\rm p} - \beta_{\rm r}) \ell} = \frac{1 - e^{ (\beta_{\rm p} - \beta_{\rm r}) (L+1)}  }{1 - e^{(\beta_{\rm p} - \beta_{\rm r} ) }}.
\]
For the case $\beta_{\rm p} < \beta_{\rm r}$, the reduction factor satisfies
\begin{equation}
\frac{\zeta_{\rm multi}}{\zeta_{\rm single} }
\leq \frac{1}{c} \, e^{ -\beta_{\rm p} L} \frac {1 - e^{ (\beta_{\rm p} - \beta_{\rm r}) (L+1)}  }{1 - e^{(\beta_{\rm p} - \beta_{\rm r} ) }},
%
%
\end{equation}
whereas for $\beta_{\rm p} > \beta_{\rm r}$, the reduction factor satisfies
\begin{equation}
\frac{\zeta_{\rm multi}}{\zeta_{\rm single} }
\leq \frac{1}{c} \, e^{ - \beta_{\rm p} L} \frac{1 - e^{ (\beta_{\rm p} - \beta_{\rm r}) (L+1)}  }{1 - e^{(\beta_{\rm p} - \beta_{\rm r} ) }} 
= \frac{1}{c} \, e^{ -\beta_{\rm r} L} \frac{1 - e^{ (\beta_{\rm r} - \beta_{\rm p}) (L+1)}  } {1 - e^{(\beta_{\rm r} - \beta_{\rm p} ) }}.
\end{equation}
In both cases, the reduction factor can be expressed as
\begin{equation}\label{eq:appen_2nd}
\frac{\zeta_{\rm multi}}{\zeta_{\rm single} }
\leq \frac{1}{c} \, e^{ -\min(\beta_{\rm p}, \beta_{\rm r}) L}  \frac {1 - a^{L+1}  }{1 - a},
\end{equation}
where $a = e^{ - | \beta_{\rm p} - \beta_{\rm r} |} \in (0, 1)$. 
Using induction, one can easily show that
\begin{equation}
\frac{1 - a^{L+1}  }{1 - a} \leq \min\Big( L+1, \frac{1}{ 1 - a} \Big) , \quad \forall L \geq 0, \forall a \in (0, 1),
\end{equation}
which completes the proof for $\beta_{\rm p} \neq \beta_{\rm r}$.

\blue{For $\beta_{\rm p} = \beta_{\rm r} = \min(\beta_{\rm p}, \beta_{\rm
  r})$ the result of Corollary \ref{coro:speedup} follows directly 
  from \eqref{app:eq1} since in that case $ \sum_{l = 0}^{L} \, e ^ { (\beta_{\rm p} - \beta_{\rm r}) \ell } = L+1$.}
\end{proof}

Using a similar derivation, we can also obtain the reduction factor for constructing the hierarchical LIS basis. 
The number of matrix vector products (with the sample-averaged Gauss-Newton Hessian $\widehat\hessian_0$) in the construction of the base level LIS via the eigenproblems \eqref{eq:exp_H0} is linear in the number of leading eigenvectors obtained, i.e., $\cO(s_0)$. 
The same holds for the number of matrix vector products with $\widehat\hessian_\ell$ in
the construction of the auxiliary LIS vectors in the recursive
enrichment solving the eigenproblems in \eqref{eq:def_exp_Hl}. 
Thus, the overall computational complexities for constructing the hierarchical LIS basis is
\[
\chi_{\rm multi} =  \cO\big( {\textstyle K \sum_{l = 0}^{L}} \, s_\ell \, M_\ell^{\vartheta_{\rm c}} \big).
\]
Similarly, the construction of the single level LIS on level $L$ is
\[
\chi_{\rm single} =  \cO\big(K r^{\rm single}_L \, M_L^{\vartheta_{\rm c}} \big),
\]
where the prefactors are the same.
The following corollary can be proved in the same way as Corollary
\ref{coro:speedup}, since we have assumed that 
$M_\ell^{\vartheta_{\rm c}} = M_0^ {\vartheta_{\rm c}} \, e ^ {\beta_{\rm m} \vartheta_{\rm c} \ell}$.
\begin{corollary}
The reduction factor of building the hierarchical LIS basis (as
opposed to the standard single-level LIS basis on level $L$) satisfies the upper bound
\begin{equation}
\frac{\chi_{\rm multi}}{\chi_{\rm single} } \leq \frac{1}{c}\,
\min\Big( L+1, \frac{1}{1 - e^{ - | \beta_{\rm m}\vartheta_{\rm c} -
    \beta_{\rm r} |} }\Big) \, e^{ - \min(\beta_{\rm m}\vartheta_{\rm c} \,,\, \beta_{\rm r}) L}\,.
\end{equation}
\end{corollary}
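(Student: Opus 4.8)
The plan is to follow the argument of Corollary~\ref{coro:speedup} verbatim, with the parameter dimension $R_\ell$ (which grows like $e^{\beta_{\rm p}\ell}$) replaced by the per-sample GNH-MVP cost $M_\ell^{\vartheta_{\rm c}}$; by Assumption~\ref{assum:LIS} the latter grows like $e^{\beta_{\rm m}\vartheta_{\rm c}\ell}$, so the exponent $\beta_{\rm p}$ appearing in the speed-up bound is simply replaced by $\beta_{\rm m}\vartheta_{\rm c}$. Concretely, I would first form the ratio $\chi_{\rm multi}/\chi_{\rm single}$. Since the two hidden prefactors agree (both carry the common factor $K$), they cancel, and substituting $s_\ell \le s_0\, e^{-\beta_{\rm r}\ell}$, $M_\ell^{\vartheta_{\rm c}} = M_0^{\vartheta_{\rm c}}\, e^{\beta_{\rm m}\vartheta_{\rm c}\ell}$, and $r^{\rm single}_L \ge c\,r_0 = c\,s_0$ yields
\[
\frac{\chi_{\rm multi}}{\chi_{\rm single}} \;\le\; \frac{\sum_{\ell=0}^L s_0\, e^{-\beta_{\rm r}\ell}\, M_0^{\vartheta_{\rm c}}\, e^{\beta_{\rm m}\vartheta_{\rm c}\ell}}{c\, s_0\, M_0^{\vartheta_{\rm c}}\, e^{\beta_{\rm m}\vartheta_{\rm c} L}} \;=\; \frac{1}{c}\, e^{-\beta_{\rm m}\vartheta_{\rm c} L} \sum_{\ell=0}^L e^{(\beta_{\rm m}\vartheta_{\rm c} - \beta_{\rm r})\ell}\,.
\]

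Next I would bound the residual sum $S := e^{-\beta_{\rm m}\vartheta_{\rm c} L}\sum_{\ell=0}^L e^{(\beta_{\rm m}\vartheta_{\rm c}-\beta_{\rm r})\ell}$ in two complementary ways and keep the smaller. Writing $a = \beta_{\rm m}\vartheta_{\rm c}$ and $b = \beta_{\rm r}$: bounding every summand by the largest one and noting there are $L+1$ of them gives $S \le (L+1)\,e^{-\min(a,b)L}$ (the maximal summand sits at $\ell=L$ when $a\ge b$, giving $(L+1)e^{-bL}$, and at $\ell=0$ when $a<b$, giving $(L+1)e^{-aL}$); alternatively, extending the sum to $\ell=\infty$ when the common ratio $e^{a-b}$ is below one, or summing the finite geometric series and dropping the $-1$ in its numerator when the ratio is at least one, gives $S \le \frac{1}{1-e^{-|a-b|}}\,e^{-\min(a,b)L}$. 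Taking the minimum of the two bounds produces exactly the claimed inequality, with $\beta_{\rm p}$ replaced by $\beta_{\rm m}\vartheta_{\rm c}$.

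The only point needing a little care --- bookkeeping rather than a genuine obstacle --- is tracking the sign of $a-b$ so that the factors $\min(L+1,\,\cdot\,)$ and $e^{-\min(a,b)L}$ fall out uniformly in both regimes, together with noting that the degenerate case $a=b$ (where the second bound is vacuous) is already covered by the $L+1$ bound. Apart from relabelling, this is precisely the computation carried out for Corollary~\ref{coro:speedup} in Appendix~\ref{sec:LIS_cost}.
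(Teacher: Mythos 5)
Your proposal is correct and matches the paper's argument: the paper itself proves this corollary by observing that it follows "in the same way as Corollary~\ref{coro:speedup}" with $M_\ell^{\vartheta_{\rm c}} = M_0^{\vartheta_{\rm c}} e^{\beta_{\rm m}\vartheta_{\rm c}\ell}$ playing the role of $R_\ell$, i.e.\ with $\beta_{\rm p}$ replaced by $\beta_{\rm m}\vartheta_{\rm c}$ throughout the computation in Appendix~\ref{sec:LIS_cost}. Your slightly different packaging of the final step (bounding the sum by the largest summand and by the geometric series separately, then taking the minimum, rather than first forming $(1-a^{L+1})/(1-a)$ and bounding that) is an inessential variation.
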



\section{\blue{Proof of Corollary \ref{coro:conditional}}}\label{sec:proof_conditional}
Due to the acceptance probability \eqref{eq:accept_multi}, we have 
\begin{align*}
\beta^{}_\ell(\bv_{\ell}^\ast, \bv_{\ell}'  ) & = \min\bigg\{1, \frac{\pi^{}_\ell\big(\bv_{\ell}' \,|\, \data\big)\,\pi^{}_{\mell}\big(\bv_{\mell}^\ast\,|\, \data\big)}{\pi^{}_\ell\big(\bv_{\ell}^\ast \,|\, \data\big)\,\pi^{}_{\mell}\big(\bv_{\mell}^\prime\,|\, \data\big)}  \,
 \frac{q \big(  \bv_{\ell, f}^\ast \,|\, \bv_{\ell}^\prime, \bv_{\mell}^\ast \big)}{q \big( \bv_{\ell,f}^\prime \,|\, \bv_{\ell}^\ast,\bv_{\mell}^\prime \big)} \bigg\},
\end{align*}
where, by definition,
\[
\frac{\pi^{}_\ell\big(\bv_{\ell}' \,|\, \data\big)\,\pi^{}_{\mell}\big(\bv_{\mell}^\ast\,|\, \data\big)}{\pi^{}_\ell\big(\bv_{\ell}^\ast \,|\, \data\big)\,\pi^{}_{\mell}\big(\bv_{\mell}^\prime\,|\, \data\big)} 
= \frac{p^{}_\ell\big(\bv_{\ell}'\big)\,p^{}_{\mell}\big(\bv_{\mell}^\ast\big)}{p^{}_\ell\big(\bv_{\ell}^\ast\big)\,p^{}_{\mell}\big(\bv_{\mell}^\prime\big)} \, \frac{\exp\big( \!\!-\! \potential^{}_\ell \big(\bv_\ell'; \data\big) \!+\! \potential^{}_{\mell} \big(\bv_{\mell}'; \data\big) \big)}{\exp\big( \!\!-\! \potential^{}_\ell \big(\bv_\ell^\ast; \data\big) \!+\! \potential^{}_{\mell} \big(\bv_{\mell}^\ast; \data\big) \big)}\,,
\]
such that we can write
\begin{equation}
\label{app:ratio}
\beta^{}_\ell(\bv_{\ell}^\ast, \bv_{\ell}' ) \!=\! 
\min\!\bigg\{\! 1, \underbrace{ \frac{p^{}_\ell \big(\bv_{\ell}'\big)\,p^{}_{\mell} \big(\bv_{\mell}^\ast\big)\,q \big(  \bv_{\ell, f}^\ast | \bv_{\ell}^\prime, \bv_{\mell}^\ast \big)}{p^{}_\ell \big(\bv_{\ell}^\ast\big)\,p^{}_{\mell} \big(\bv_{\mell}^\prime\big)\,q \big( \bv_{\ell,f}^\prime | \bv_{\ell}^\ast,\bv_{\mell}^\prime \big)} }_{\circled{\small 1}} \!
\underbrace{ \frac{\exp\big( \!\!-\! \potential^{}_\ell \big(\bv_\ell'; \data\big) \!+\! \potential^{}_{\mell} \big(\bv_{\mell}'; \data\big) \big)}{\exp\big( \!\!-\! \potential^{}_\ell \big(\bv_\ell^\ast; \data\big) \!+\! \potential^{}_{\mell} \big(\bv_{\mell}^\ast; \data\big) \big)} }_{\circled{\small 2}} \!\!\bigg\} .
\end{equation}

The level $\ell$ parameter vectors can be split as $\bv_{\ell}' = (\bv_{\ell,f}', \bv_{\ell,c}')$ and $\bv_{\ell}^\ast = (\bv_{\ell,f}^\ast, \bv_{\ell,c}^\ast)$. 
and we have $\bv_{\ell,c}' = \bv_{\mell}'$ and $\bv_{\ell,c}^\ast =
\bv_{\mell}^\ast$ by construction in the coupling procedure. Thus, 
\begin{equation}\label{eq:ratio1}
\circled{\small 1} = \frac{p^{}_\ell \big(\bv_{\ell,f}', \bv_{\ell,c}'\big)\,p^{}_{\mell} \big(\bv_{\ell,c}^\ast\big)\,q \big(  \bv_{\ell, f}^\ast | \bv_{\ell,f}', \bv_{\ell,c}', \bv_{\ell,c}^\ast \big)}{p^{}_\ell \big(\bv_{\ell,f}^\ast, \bv_{\ell,c}^\ast\big)\,p^{}_{\mell} \big(\bv_{\ell,c}'\big)\,q \big( \bv_{\ell,f}^\prime | \bv_{\ell,f}^\ast, \bv_{\ell,c}^\ast,\bv_{\ell,c}' \big)}.
\end{equation}
The density of the conditional DILI proposal $q \big( \bv_{\ell,f}^\prime | \bv_{\ell,f}^\ast, \bv_{\ell,c}^\ast,\bv_{\ell,c}' \big)$ is defined as
\begin{equation}\label{eq:ratio_DILI_cond}
q \big( \bv_{\ell,f}^\prime | \bv_{\ell,f}^\ast, \bv_{\ell,c}^\ast,\bv_{\ell,c}' \big) = \frac{q \big( \bv_{\ell,f}', \bv_{\ell,c}' | \bv_{\ell,f}^\ast, \bv_{\ell,c}^\ast \big)}{q \big( \bv_{\ell,c}' | \bv_{\ell,f}^\ast, \bv_{\ell,c}^\ast \big)},
\end{equation}
that is the ratio between the DILI proposal density and the marginal DILI proposal density, which takes the form
\begin{equation}
q \big( \bv_{\ell,c}' | \bv_{\ell,f}^\ast, \bv_{\ell,c}^\ast \big) \equiv \int q \big( \bv_{\ell,f}', \bv_{\ell,c}' | \bv_{\ell,f}^\ast, \bv_{\ell,c}^\ast \big) d\bv_{\ell,f}'.
\end{equation}

Due to Corollary \ref{coro:accept_DILI}, the DILI proposal
$q(\bv_{\ell}' |  \bv_{\ell}^\ast)$ has the prior distribution
$p_\ell(\bv_\ell)$ as invariant measure, i.e.,
\begin{equation}
p^{}_\ell \big(\bv_{\ell}^\ast\big) q \big( \bv_{\ell}' | \bv_{\ell}^\ast \big) = p^{}_\ell \big(\bv_{\ell}' \big).
\end{equation}
Hence, if $\bv_{\ell}^\ast = (\bv_{\ell,f}^\ast, \bv_{\ell,c}^\ast)$ is drawn from the prior $p_\ell(\bv_{\ell})$, then the proposal candidate $\bv_{\ell}' = (\bv_{\ell,f}', \bv_{\ell,c}')$ also follows the prior $p_\ell(\bv_{\ell})$. 
Furthermore, if $\bv_{\ell}^\ast$ is drawn from $p_\ell(\bv_{\ell})$,
then the marginal DILI proposal $q \big( \bv_{\ell,c}' |
\bv_{\ell,f}^\ast, \bv_{\ell,c}^\ast \big)$ generates candidates with
coarse components that follow the marginal prior
\begin{equation*}
\int p^{}_\ell(\bv_{\ell,f}', \bv_{\ell,c}') d\bv_{\ell,f}',
\end{equation*}
which for our particular choice of parametrisation is
the same as the prior $p_{\mell} \big(\bv_{\ell,c}'\big)$ on level
$\mell$, that is, $p^{}_\ell \big(\bv_{\ell}^\ast\big) q \big( \bv_{\ell,c}' | \bv_{\ell}^\ast \big) = p^{}_{\mell} \big(\bv_{\ell,c}' \big)$. 
Using this identity and substituting \eqref{eq:ratio_DILI_cond} into \eqref{eq:ratio1}, the ratio \circled{\small 1} can be simplified to
\begin{align*}
\circled{\small 1} & = \frac{p^{}_\ell \big(\bv_{\ell,f}', \bv_{\ell,c}'\big)\,q \big(  \bv_{\ell, f}^\ast, \bv_{\ell,c}^\ast  | \bv_{\ell,f}', \bv_{\ell,c}'\big)\,q \big( \bv_{\ell,c}' | \bv_{\ell,f}^\ast, \bv_{\ell,c}^\ast \big)\,p^{}_{\mell} \big(\bv_{\ell,c}^\ast\big)}
{p^{}_\ell \big(\bv_{\ell,f}^\ast, \bv_{\ell,c}^\ast\big)\,q \big( \bv_{\ell,f}', \bv_{\ell,c}' | \bv_{\ell,f}^\ast, \bv_{\ell,c}^\ast \big)\,q \big( \bv_{\ell,c}^\ast | \bv_{\ell,f}', \bv_{\ell,c}' \big)\,p^{}_{\mell} \big(\bv_{\ell,c}'\big)} \\
& = \frac{p^{}_\ell \big(\bv_{\ell,f}^\ast, \bv_{\ell,c}^\ast\big)\,q \big( \bv_{\ell,c}' | \bv_{\ell,f}^\ast, \bv_{\ell,c}^\ast \big)\,p^{}_{\mell} \big(\bv_{\ell,c}^\ast\big)}
{p^{}_\ell \big( \bv_{\ell,f}', \bv_{\ell,c}' \big)\,q \big( \bv_{\ell,c}^\ast | \bv_{\ell,f}', \bv_{\ell,c}' \big)\,p^{}_{\mell} \big(\bv_{\ell,c}'\big)} 
 = \frac{p^{}_{\mell} \big(\bv_{\ell,c}' \big)\,p^{}_{\mell} \big(\bv_{\ell,c}^\ast\big)}
{p^{}_{\mell} \big(\bv_{\ell,c}^\ast \big)\,p^{}_{\mell}
  \big(\bv_{\ell,c}'\big)} = 1.
\end{align*}
The result then follows immediately from \eqref{app:ratio}.

\end{document}